\def\real    { \mathbb{R} }
\newtheorem{thm}{Theorem}[section]
\newtheorem{lemma}{Lemma}[section]
\newcommand{\bitem}{\begin{itemize}}
\newcommand{\eitem}{\end{itemize}}
\newcommand{\argmin}{\mathrm{argmin}}
\newcommand{\supp}{\mathrm{supp}}
\newcommand{\beqn}{\begin{equation}}
\newcommand{\eeqn}{\end{equation}}
\newcommand{\balign}{\begin{align}}
\newcommand{\ealign}{\end{align}}
\newcommand{\ambient}{B}
\newcommand{\sparsity}{W}
\newcommand{\meas}{M}
\newcommand{\subsamp}{\rho}
\newcommand{\subsampmax}{\subsamp_{\mathrm{max}}}
\newcommand{\subsampcs}{\subsamp_{\mathrm{cs}}}
\newcommand{\betamax}{\beta_C^{\mathrm{max}}}
\newcommand{\betamin}{\beta_C^{\mathrm{min}}}
\newcommand{\balpha}{\boldsymbol{\alpha}}
\newcommand{\bSigma}{\boldsymbol{\Sigma}}
\newcommand{\y}{\boldsymbol{y}}
\newcommand{\e}{\boldsymbol{e}}
\newcommand{\x}{\boldsymbol{x}}
\newcommand{\n}{\boldsymbol{n}}
\newcommand{\I}{\boldsymbol{I}}
\newcommand{\bPhi}{\boldsymbol{\Phi}}
\newcommand{\bPsi}{\boldsymbol{\Psi}}
\newcommand{\R}{\boldsymbol{R}}
\newcommand{\A}{\boldsymbol{A}}
\newcommand{\B}{\boldsymbol{B}}
\newcommand{\U}{\boldsymbol{U}}
\newcommand{\V}{\boldsymbol{V}}
\newcommand{\sN}{\mathcal{N}}
\newcommand{\norm}[2][2]{\left\| #2 \right\|_{#1}} 
\newcommand{\iprod}[2]{\left\langle {#1},{#2} \right\rangle}
\newcommand{\expec}[1]{\mathbb{E}\left( #1 \right)}
\newcommand{\prob}[1]{\mathbb{P}\left( #1 \right)}
\newcommand{\quant}[1]{Q_b \left( #1 \right)}
\newcommand{\trace}[1]{\mathrm{Tr} \left( #1 \right)}
\newlength{\imgwidth}
\title{The Pros and Cons of Compressive Sensing \\ for Wideband Signal Acquisition: \\ Noise Folding vs.\ Dynamic Range}
\author{Mark~A.~Davenport, Jason~N.~Laska, John~R.~Treichler, and~Richard~G.~Baraniuk\thanks{M.D. is with the Department of Statistics, Stanford University, Stanford, CA.  J.L. is with Dropcam, Inc., San Francisco, CA.  J.T. is with Applied Signal Technology, Inc., Sunnyvale, CA.  R.B. is with the Department of Electrical and Computer Engineering, Rice University, Houston, TX. M.D. was supported by the grant NSF DMS-1004718.  J.L. and R.B. were supported by the grants NSF CCF-0431150, CCF-0728867, CCF-0926127, CNS-0435425, and CNS-0520280, DARPA/ONR N66001-08-1-2065, N66001-11-1-4090, ONR N00014-07-1-0936, N00014-08-1-1067, N00014-08-1-1112, and N00014-08-1-1066, AFOSR FA9550-07-1-0301 and FA9550-09-1-0432, ARO MURI W911NF-07-1-0185 and W911NF-09-1-0383, and the Texas Instruments Leadership University Program.  This work extends preliminary results presented in~\cite{TreicDB_Application}. For many insightful discussions, thanks to Glenn Collins, Jeffrey Harp, and Jared Sorensen at AST, and Stephen Schnelle, J.P.\ Slavinsky, and Andrew Waters at Rice.}}
\begin{document}
\maketitle

\begin{abstract}
\noindent Compressive sensing (CS) exploits the sparsity present in many signals to reduce the number of measurements needed for digital acquisition. With this reduction would come, in theory, commensurate reductions in the size, weight, power consumption, and/or monetary cost of both signal sensors and any associated communication links. This paper examines the use of CS in the design of a wideband radio receiver in a noisy environment. We formulate the problem statement for such a receiver and establish a reasonable set of requirements that a receiver should meet to be practically useful. We then evaluate the performance of a CS-based receiver in two ways: via a theoretical analysis of its expected performance, with a particular emphasis on noise and dynamic range, and via simulations that compare the CS receiver against the performance expected from a conventional implementation.  On the one hand, we show that CS-based systems that aim to reduce the number of acquired measurements are somewhat sensitive to signal noise, exhibiting a $3$dB SNR loss per octave of subsampling, which parallels the classic noise-folding phenomenon.  On the other hand, we demonstrate that since they sample at a lower rate, CS-based systems can potentially attain a significantly larger dynamic range.  Hence, we conclude that while a CS-based system has inherent limitations that do impose some restrictions on its potential applications, it also has attributes that make it highly desirable in a number of important practical settings.
\end{abstract}

\section{Introduction}
\label{sec:intro}

The recently developed \emph{compressive sensing} (CS) framework has radically transformed our understanding of signal acquisition~\cite{Baran_Compressive,Cande_Compressive,DavenDEK_Introduction,Donoh_Compressed}. Rather than sampling signals at twice their bandwidth, CS liberates us from the yoke of Shannon-Nyquist theory by proposing that signals be sampled at a rate proportional to their information content.  Specifically, CS enables the acquisition of \emph{sparse} signals, i.e., those that have only a few non-zero coefficients in some suitable representation.

Despite the attention that CS has received in recent years, several key facets have remained unaddressed when it comes to using CS in practice.  For example, there is rich analysis for bounded noise or Gaussian noise on CS measurements; however, in practice noise is typically present in both the measurements and the signal itself.  Moreover, the impact of finite-range amplitude quantization has been largely ignored.

In this paper we tackle these issues by examining the nuts and bolts of a specific application: the CS acquisition receiver.  We begin with a detailed ``engineer's guide'' to CS, emphasizing the practical requirements.  We then rigorously analyze the setting where noise is present not only in the measurements but also in the signal.  In this context we study the impact of \emph{noise folding} owing to subsampling in the presence of signal noise.  On the negative side, we show that when noise is added to the input signal, a reduction in the sampling rate by one half induces a $3$dB SNR loss on the final signal estimate~\cite{TreicDB_Application}.  This effect, mainly overlooked in the past, highlights a real cost associated with the use of CS.  We also note that while we focus here primarily on the problem of using CS to acquire noisy wideband signals, our analysis is general and applies to any potential application of CS.

To balance the negatives of noise folding, we also study the {\em dynamic range} of a CS-based receiver.  On the positive side, we demonstrate that the dynamic range of a CS-based receiver can be significantly improved compared to a conventional analog-to-digital converter (ADC), primarily since by sampling at a lower rate we can typically quantize with a higher effective number of bits.  This theoretical analysis is confirmed by simulations that demonstrate that the impact of noise and finite dynamic range quantization follow our predictions.  Empirically, we observe that each reduction in the sampling rate by one half leads to an SNR gain of approximately $5$dB.

This paper is organized as follows.  Section~\ref{sec:reqs} formulates the problem statement and establishes a set of requirements that a receiver should meet to be highly attractive for practical use. Section~\ref{sec:csthy} reviews the CS theory, and Section~\ref{sec:whitenoise} builds on this theory to analyze the performance of a CS-based receiver in the presence of signal and measurement noise.  Section~\ref{sec:dr} provides an analysis of the dynamic range of a CS-based receiver, demonstrating the potential for substantial improvement.  Section~\ref{sec:tests} presents the results of a set of simulations that compare a CS-based receiver with our theoretical predictions.  These results are further discussed in Section~\ref{sec:sims}. Recommendations for additional study and investigation appear in Section~\ref{sec:conc}.

\section{Putative Requirements}
\label{sec:reqs}

Our objective in this paper is to explore the attributes and capabilities of CS by examining how it might be applied to meet a specific set of requirements. The particular application addressed is a wideband radio frequency (RF) signal acquisition receiver, a device commonly used in both commercial and military systems to monitor a wide band of radio frequencies for the purposes of ({\em i}) detecting the presence of signals, ({\em ii})~characterizing them, and, where appropriate, ({\em iii}) extracting a specific signal from the several that might be present within that band. Many types of acquisition receivers have been designed, built, and sold over the years, but we will choose here a set of putative requirements for such a receiver to ease comparisons and analysis. The reader is invited to repeat the comparison for other parameter choices.

\begin{table}[t]
\caption{A putative set of specifications for an advanced RF signal acquisition receiver.}
\label{tab:reqs}
\centering
\begin{tabular}{lcc}
\hline\hline
Instantaneous bandwidth & $B/2$ & 500 MHz \\
Instantaneous dynamic range & $D\!R$ & $96$dB \\
SNR degradation/noise figure & $N\!F$ & $12$dB \\
Maximum signal bandwidth & $W/2$ & 200 kHz \\
\hline\hline
\end{tabular}
\end{table}

The attributes that characterize an acquisition receiver typically fall into two categories: technical specifications --- such as instantaneous bandwidth --- and various ``costs'' --- such as size, weight, and power consumption (SWAP) and monetary cost. In this paper we will address only the few most important technical specifications:
\begin{itemize}

\item {\em Instantaneous bandwidth} --- the radio frequency (RF) range over which signals will be accepted by the receiver and handled with full fidelity.

\item {\em Instantaneous dynamic range} --- the ratio of the maximum to minimum signal power level with which received signals can be handled with full fidelity.

\item {\em SNR degradation} --- sometimes termed ``noise figure,'' a measure of the tendency of the receiver to lower the input signal-to-noise ratio (SNR) of a received signal, usually measured in dB. The root cause of this degradation has historically depended on the technology used to build the receiver.

\item {\em Maximum signal bandwidth} --- the maximum combined bandwidth of the constituent signals in the acquisition bandwidth of the receiver.

\end{itemize}
These requirements must be met subject to many constraints, including, at least, SWAP and monetary cost. There are also typically system-level constraints, such as the bandwidth available for communicating what the receiver has discovered to other assets or a central processing facility.

Historically RF signal acquisition receivers were first built using purely analog technology, then, more recently, with analog technology conditioning the signal environment sufficiently to employ a high-rate ADC followed by digital processing, storage, and/or transmission. If and when it can be applied, CS offers the promise to ({\em i}) increase the instantaneous input bandwidth, ({\em ii}) lower all of the cost attributes of the sensor, and ({\em iii}) move the computationally intensive portions of the acquisition process away from the sensor and toward a central processing facility.

For the purposes of the comparisons to be made in this paper, we will assume a set of requirements for an acquisition system that are rather audacious and would at the least stress conventional implementations at the present time.  These requirements are listed in Table~\ref{tab:reqs}.  To meet the bandwidth and dynamic range requirements, conventional designs would typically be forced to use techniques based on scanning narrowband receivers across the band.  If CS-based systems can be shown to work in such settings without the need for scanning at the receiver, then they would have broad application.

In order to apply CS, we must make two  last, but important, assumptions:
\begin{enumerate}

\item {\em Signal sparsity} --- In order to meet the first-order assumption of all CS techniques, in this paper we assume that the input signal is sparse.  To be concrete, in Table~\ref{tab:reqs} we assume that the sum of the bandwidths of all signals present in the full acquisition band is no more than 200 kHz. Note that this is significantly smaller than the instantaneous bandwidth of 500 MHz.  Thus we are assuming that the RF input to the receiver is significantly sparse in the frequency domain (the instantaneous bandwidth is only $1/2500$ occupied). Although inputs with this level of spectral sparsity are not common, they exist often enough to make a solution useful if it can be found. To test the impact of the sparsity assumption for this application, we will evaluate the performance, both theoretically and in simulation, for both the case where the input is noise-free, so that the input signal is truly sparse, and in the more practical case where the input is contaminated with additive white noise.

\item {\em Processing asymmetry} --- Our objective is to minimize all receiver and data link costs,  i.e.,  the SWAP and monetary cost of the receiver and the bandwidth required for transmission. We assume that once data is acquired and transmitted, we are prepared to invest heavily in a (possibly centralized) system that can do as much processing as needed to detect, characterize, and/or recover the signal of interest.  In other words, we assume that there is no cost to processing the receiver output, while there is high cost to the receiver acquisition and data forwarding processes.

\end{enumerate}

\section{Compressive Sensing Background}
\label{sec:csthy}

\subsection{CS signal acquisition}
\label{ssec:csthy}

In the present work we are concerned with the acquisition of a real-valued, continuous-time signal of interest, which we will denote by $x(t)$.  We will assume that $x(t)$ is bandlimited with instantaneous bandwidth $\ambient/2$ Hz.  We will further assume that $x(t)$ is sparse in the sense that the total occupied bandwidth, denoted by $\sparsity/2$, is much smaller than the instantaneous bandwidth, i.e., $\sparsity \ll \ambient$.  From the Shannon-Nyquist sampling theorem, we have that $x(t)$ can be perfectly reconstructed from a set of uniform samples taken at a rate of $\ambient$ Hz, and thus we can equivalently consider the problem of acquiring the corresponding discrete-time version of our signal of interest, which we will denote by $x[n]$.

In both the CS theory and in practice, we will actually be concerned with acquiring a finite window of $x(t)$, or equivalently $x[n]$.  Without loss of generality, we will assume that we are interested in a window of duration $T=1$ seconds.\footnote{We set $T=1$ for convenience; for other $T$ one can merely replace $\ambient$ and $\sparsity$ below with $\ambient T$ and $\sparsity T$, {\em mutatis mutandis}.} In this paper we will follow~\cite{TroppLDRB_Beyond} and make the additional simplifying assumption that $x(t)$ has a finite number of bounded harmonics, i.e.,
$$
x(t) = \bPsi (\balpha) = \sum_{k=0}^{\ambient-1} \alpha_k \psi_k(t),
$$
where $\psi_k(t) = e^{j 2 \pi k t}$ are the Fourier basis functions and where $\balpha = [\alpha_0, \alpha_1, \ldots, \alpha_{\ambient-1}]$ is a complex-valued vector of length $\ambient$. We will further assume that $\balpha$ has exactly $\sparsity$ nonzeros, corresponding to the active frequencies in the acquisition bandwidth. Thus, we assume that $\alpha_k = 0$ exactly outside of these $\sparsity$ nonzeros, in which case we say that the vector $\balpha$ is $\sparsity$-{\em sparse}. In practice, we do not observe truly $\sparsity$-sparse signals for two main reasons: ({\em i}) the spectrum will typically be contaminated by noise, and ({\em ii}) real-world signals are never perfectly bandlimited, and moreover restricting our attention to a finite-length window leads to an inevitable ``leakage'' of energy outside the signal's original support.  Although there exist alternative approaches to help mitigate the effect of windowing that build on classical spectral estimation techniques~\cite{DuartB_Spectral,DavenW_Compressive}, an analysis of these more specialized approaches is beyond the scope of this paper and is largely irrelevant to our discussion.  Hence, we will assume throughout this paper that noise dominates the leakage and that $\balpha$ can be modeled as a Fourier-sparse vector contaminated by noise.

The Shannon-Nyquist sampling theorem states that $\ambient$ samples of $x(t)$ per $T=1$ second contain all of the information in $x(t)$. Our goal in CS is to do better: to acquire $x(t)$ via $\meas = \ambient/\subsamp$ measurements per second with $\subsamp \ge 1$ as large as possible. The {\em subsampling factor} $\subsamp$ strongly affects the various costs (i.e., SWAP and monetary cost) described in Section~\ref{sec:reqs}. Observe that if the locations of the $\sparsity$ nonzeros of $\balpha$ are known {\em a priori}, then by filtering and decimation we could drive $\subsamp$ as large as $\subsampmax = \ambient/\sparsity$.  Our aim is to show that we can actually acquire $x(t)$ via a set of $\meas$ nonadaptive, linear measurements that require no {\em a priori} knowledge of the locations of the nonzeros of $\balpha$, with $\subsamp$ nearly as large as $\subsampmax$.

Towards this end, we acquire the measurements
\begin{equation} \label{eq:phi1}
\y = \bPhi(x(t)) + \e,
\end{equation}
where $\bPhi$ is a linear {\em measurement operator} that maps functions defined on $[0,1]$ to a length $\meas$ vector $\y$ of measurements, and where $\e$ is a length $\meas$ vector that represents {\em measurement noise} generated by the acquisition hardware. The central theoretical question in CS is how to design $\bPhi$ to ensure that we will be able to recover $x(t)$ from the measurements $\y$.  There are many approaches to solving this problem.  Typically we split this question into two parts: ({\em i}) What properties of $\bPhi$ will ensure that there exists {\em some} algorithm that can recover $x(t)$ from $\y$? and ({\em ii}) What algorithms can perform this recovery efficiently?

The answer to the first question is rather intuitive.  In order to simplify our discussion, we observe that since $\bPhi$ is linear, we can write
\begin{align*}
y_j = \iprod{\phi_j(t)}{x(t)} & = \iprod{\phi_j(t)}{\sum_{k=0}^{\ambient-1} \alpha_k \psi_k(t)} \\
  & = \sum_{k=0}^{\ambient-1} \alpha_k \iprod{\phi_j(t)}{\psi_k(t)}.
\end{align*}
If we let $\R$ denote the $\meas \times \ambient$ matrix with entries given by $r_{jk} = \iprod{\phi_j(t)}{\psi_k(t)}$, then (\ref{eq:phi1}) reduces to
\begin{equation} \label{eq:phi2}
\y = \R \balpha + \e.
\end{equation}
Thus, our problem reduces to that of designing the matrix $\R$.\footnote{Note that we can always go from $\R$ back to $\bPhi$ by setting $\bPhi = \R \bPsi^*$, where $\bPsi^*$ denotes the adjoint of $\bPsi$.}  Although many properties for $\R$ have been studied, the most common is the {\em restricted isometry property} (RIP)~\cite{CandeT_Decoding}.  The RIP of order $\sparsity$ requires that there exists a constant $\delta \in (0,1)$ such that for all $\sparsity$-sparse $\balpha$,

\begin{equation}
\label{RIP} \sqrt{1-\delta} \le \frac{\norm{\R \balpha}}{\norm{\balpha}} \le \sqrt{1+\delta}.
\end{equation}
In words, $\R$ preserves the Euclidean norm of $\sparsity$-sparse vectors.  Equivalently, the RIP of order $2\sparsity$ ensures that $\R$ preserves the Euclidean distance between pairs of $\sparsity$-sparse vectors. The RIP provides a guarantee that any $\sparsity$-sparse signal consistent with measurements perturbed as in~\eqref{eq:phi2} will be close to the original signal, and so the RIP ensures that the system has a degree of robustness to measurement noise.

We now consider how to design a matrix $\R$ satisfying the RIP.  An important result from the CS theory is that if the entries $r_{ij}$ are independent realizations from a Gaussian, Rademacher ($\pm 1$-valued), or more generally, any bounded, zero-mean distribution, then with overwhelmingly high probability $\R$ will satisfy the RIP of order $\sparsity$ provided that
\begin{equation} \label{eq:Qc}
\subsamp \le \subsampcs = \frac{\kappa_0}{\log \subsampmax} \subsampmax,
\end{equation}
where $\kappa_0 < 1$ is a constant that depends on $\ambient$ and the probability with which (\ref{RIP}) holds~\cite{BaranDDW_Simple}.   From this we conclude that CS-based measurement operators $\bPhi$ pay a small penalty in terms of $\rho$ (of $\kappa_0/\log(\subsampmax)$), for not exploiting any {\em a priori} knowledge of the locations of the nonzero frequencies.

In general, the theoretical analysis is somewhat lacking concerning the precise value of $\kappa_0$. If a specific value for $\kappa_0$ is required, then one must typically determine it experimentally.  This can be accomplished via Monte Carlo simulations that identify how many measurements are sufficient to ensure exact recovery in the noise-free setting on at least, say, 99\% of trials.  As an example, it is shown in~\cite{TroppLDRB_Beyond} that $\subsamp \lesssim 0.6 \subsampmax/\log(\subsampmax)$ is sufficient to enable exact recovery in the noise-free setting.  Note that this is not the same as demonstrating that $\subsamp \lesssim 0.6 \subsampmax/\log(\subsampmax)$ is sufficient to ensure that $\R$ satisfies the RIP, but it is suggestive that the true value of $\kappa_0$ is much greater than the conservative estimates provided by the theory. We will observe this phenomenon for ourselves in Section~\ref{sec:tests}.

Since the random matrix approach is somewhat impractical to build in hardware, several hardware architectures have been implemented and/or proposed that enable compressive samples to be acquired in practical settings. Examples include the random demodulator~\cite{TroppLDRB_Beyond}, random filtering~\cite{TroppWDBB_Random,LaskaSB_Polyphase}, the modulated wideband converter~\cite{MishaE_From}, random convolution~\cite{BajwaHRWN_Toeplitz,Rombe_Compressive}, the compressive multiplexer~\cite{SlaviLDB_Compressive}, and more~\cite{YuHS_Mixed-Signal,LexaDT_Reconciling}. Mathematically, such systems can typically be represented as matrices that operate on the vector $\x$ of Nyquist-rate samples. The corresponding matrices, although randomized, typically exhibit a great deal of structure. Although theoretical analysis of structured random matrices remains a topic of active study in the CS community, there do exist theoretical guarantees for some of these architectures~\cite{TroppLDRB_Beyond,BajwaHRWN_Toeplitz,Rombe_Compressive,SlaviLDB_Compressive}.  The amount of subsampling possible with these constructions is generally consistent with fully random measurements as given in (\ref{eq:Qc}), although proofs that these constructions satisfy the RIP typically result in raising the denominator in (\ref{eq:Qc}) to a small power (e.g., 2 or 4).

\subsection{CS recovery algorithms}
\label{subsec:rec}

We now address the question of how to recover the signal $x(t)$ from the measurements $\y$.  Most algorithms actually provide a recovery of $\balpha$, denoted $\widehat{\balpha}$.  By setting $\widehat{x}(t) = \bPsi(\widehat{\balpha})$, we can recover  $x(t)$. Alternatively, if we build a matrix $\bPsi$ by choosing the $k^\mathrm{th}$ column to be uniform samples of the function $\psi_{k}(t)$, then we can recover the Nyquist-rate samples of $x(t)$ directly by $\widehat{\x} = \bPsi \widehat{\balpha}$. Thus, $x(t)$, $\x$, and $\balpha$ are effectively equivalent representations of the signal of interest.

The original CS theory proposed $\ell_1$-minimization as a recovery technique when dealing with noise-free measurements~\cite{Cande_Compressive,Donoh_Compressed}.  Noisy measurements as in (\ref{eq:phi2}) can be easily handled using similar techniques provided that the noise $\e$ is bounded, meaning that $\norm{\e} \leq \epsilon$.  The accuracy of this method is made precise in~\cite{CandeRT_Stable}, which establishes that for $\sparsity$-sparse signals, provided $\R$ satisfies the RIP of order $2\sparsity$, the recovery error can be bounded by
\begin{equation} \label{eq:thm1}
\norm{\widehat{\balpha} - \balpha} \leq \kappa_1 \epsilon,
\end{equation}
where $\kappa_1 \ge 1$ is a constant that depends on the subsampling factor $\subsamp$.  The optimal value of this constant is similarly difficult to determine analytically, but in practice it should be close to 2 provided that $\subsamp < \subsampcs$.  Thus, measurement noise has a controlled impact on the amount of noise in the reconstruction.  A similar guarantee can be obtained for approximately sparse, or {\em compressible}, signals.\footnote{By compressible, we mean signals that are well approximated by a sparse signal.  The guarantee for compressible signals is similar to (\ref{eq:thm1}) but includes an additional term that quantifies the error incurred by approximating $x(t)$ with a sparse signal.}

$\ell_1$-minimization techniques are powerful methods for CS signal recovery, but there also exist a variety of alternative algorithms that are commonly used in practice and for which performance guarantees comparable to that of (\ref{eq:thm1}) can be established.  As an example, CoSaMP is a greedy algorithm known to satisfy guarantees similar to (\ref{eq:thm1})~\cite{NeedeT_CoSaMP}. CoSaMP and similar greedy algorithms build on a common set of simple techniques and can be easily understood by breaking the recovery problem into two separate sub-problems: identifying the locations of the nonzero coefficients of $\balpha$ and estimating the values of the nonzero coefficients of $\balpha$.  The former problem is clearly somewhat challenging, but once solved, the latter is relatively straightforward and can be solved using standard techniques like least squares.  In particular, suppose that an {\em oracle} provides us with the indices of the nonzero coefficients of $\balpha$ (its {\em support}), denoted by the index set $\Lambda$. Then a natural recovery strategy is to solve the problem:
\begin{equation}
\label{eq:oracle}
\widehat{\balpha} = \mathop{\argmin}_{\balpha} \ \norm{\R \balpha - \y } \quad \mathrm{s.t.} \quad \supp(\balpha) = \Lambda.
\end{equation}
If we let $\R_\Lambda$ denote the submatrix of $\R$ that contains only the columns of $\R$ corresponding to the index set $\Lambda$, then the solution to (\ref{eq:oracle}) is obtained via the pseudoinverse of $\R_\Lambda$, denoted $\R_\Lambda^\dagger$, i.e.,
\begin{equation}
\widehat{\balpha}|_{\Lambda} = \R_{\Lambda}^\dag \y \quad \quad \mathrm{and} \quad \quad \widehat{\balpha}|_{\Lambda^c} = 0.
\label{eq:ls2}
\end{equation}
Note that in the noise-free setting, if the oracle provides the correct $\Lambda$, then $\y = \R_\Lambda \balpha$, and so plugging this into (\ref{eq:ls2}) yields $\widehat{\balpha} = \balpha$ provided that $\R_\Lambda$ is full rank.  Thus, the central challenge in recovery is to correctly identify the support.  CoSaMP and related algorithms solve this problem by iteratively identifying likely nonzeros, estimating their values, and then improving the estimate of the support of the signal.

\section{Impact of White Noise on CS-Based Acquisition Systems}
\label{sec:whitenoise}

\subsection{Noise model and performance metrics}

The bulk of the CS literature focuses on CS acquisition and recovery in the face of {\em measurement noise} $\e$ as in (\ref{eq:phi2})~\cite{CandeRT_Stable,NeedeT_CoSaMP,BlumeD_Iterative,CohenDD_Instance,CandeT_Dantzig}.  In this section we build on~\cite{TreicDB_Application} to examine the effect of both measurement noise (i.e., noise caused by the measurement hardware) as well as {\em signal noise} (noise at the antenna induced by the fact that the receiver's physical temperature is above absolute zero combined with other random disturbances in the channel). Specifically, we now assume that the Fourier representation of $x(t)$ consists of a $\sparsity$-sparse signal corrupted by additive noise $\n$.  Thus, we acquire the measurements
$$
\y = \R (\balpha + \n) + \e = \R \balpha + \R \n + \e.
$$
This scenario is subtly different from (\ref{eq:phi2}), because part of the noise is now scaled by the matrix $\R$.  Our chief interests are to understand how $\R$ affects $\n$ and how $\R \n + \e$ manifests itself in the recovered $\widehat{\balpha}$.

It is commonly assumed that $\n = 0$ and that $\e$ is an arbitrary bounded vector.  In many common settings it is more natural to assume that $\e \sim \sN(0,\sigma_{\e}^2 \I_{\meas})$, i.e., $\e$ is i.i.d.\ Gaussian noise, and likewise for $\n$.  In this section we will consider the more general setting where $\e$ is a zero-mean, {\em white} random vector, meaning that
\begin{equation} \label{eq:whitenoisemean}
\expec{\e} = 0 \quad \quad \mathrm{and} \quad \quad \expec{ \e \e^T } = \sigma_{\e}^2 \I_{\meas}.
\end{equation}
Similarly, we will assume that $\n$ is also a zero-mean, white random vector (with $\expec{ \n \n^T } = \sigma_{\n}^2 \I_{\ambient}$).  Note that this means that the noise is added across the entire $\ambient$-dimensional Fourier spectrum and not just to the $\sparsity$ nonzeros of $\balpha$.

In order to quantify the impact of noise on our measurement process, we will define a variety of different {\em signal-to-noise ratios} (SNRs).  To begin, we recall a classical method for calculating the SNR of the original signal, which we will refer to as the {\em input SNR} (ISNR). The ISNR as defined here\footnote{Our definition corresponds to the so-called {\em in-band SNR}.  An alternative approach would be to consider the {\em out-of-band SNR}, which includes the entire noise vector across the full bandwidth. For white noise, these definitions are essentially equivalent but will differ by a factor of $\sparsity/\ambient$.} measures the SNR by including only the noise within the same bandwidth as the signal, which in our notation is given by
\begin{equation} \label{eq:IBSNR}
\mathrm{ISNR} = \frac{\norm{\balpha}^2}{\expec{\norm{(\balpha + \n)|_\Lambda - \balpha}^2}} = \frac{\norm{\balpha}^2}{\expec{\norm{\n|_\Lambda}^2}},
\end{equation}
where $\Lambda$ represents the support of $\balpha$.\footnote{Note that in this paper the signal $\balpha$ is deterministic, so the SNR will depend on our particular choice of $\balpha$.} The ISNR is a measure of the SNR of the signal itself prior to the acquisition of any measurements.

In the context of CS, it is also useful to consider two additional notions of SNR.  We define the {\em measurement SNR} (MSNR) as
\begin{equation} \label{eq:MSNR}
\mathrm{MSNR} = \frac{\norm{\R \balpha}^2}{ \expec{\norm{\y - \R \balpha}^2} } = \frac{\norm{\R \balpha}^2}{ \expec{\norm{\R \n + \e}^2} }
\end{equation}
and the {\em recovered SNR} (RSNR) as
\begin{equation} \label{eq:RSNR}
\mathrm{RSNR} = \frac{\norm{\balpha}^2}{ \expec{\norm{ \widehat{\balpha} - \balpha}^2} },
\end{equation}
where $\widehat{\balpha}$ is the output of our CS recovery algorithm.  In this section we will analyze the RSNR where $\widehat{\balpha}$ is the output of the oracle-assisted recovery algorithm in (\ref{eq:ls2}) applied to $\y = \R (\balpha + \n) + \e$.  We focus on the oracle-assisted recovery algorithm in order to simplify the analysis while illustrating a ``best-case'' scenario (the oracle-assisted approach yields the same output as a greedy method like CoSaMP when the greedy method correctly identifies the true support).  Note that in this setting, if we were to directly acquire the full signal, i.e., set $\R = \I_{\ambient}$, then (ignoring the impact of the measurement noise, which in this case would simply be combined with the signal noise term) the RSNR would be identical to the ISNR.  Since the oracle knows which elements should be zero, it is able to achieve zero error on those coefficients --- the only impact of the noise is on the nonzero coefficients.

\subsection{Impact of white measurement noise on the RSNR}

We begin by examining the impact on the RSNR of zero-mean, white measurement noise, i.e., the case where $\e$ satisfies (\ref{eq:whitenoisemean}). To isolate the impact of measurement noise, for the moment we will assume that $\n= \boldsymbol{0}$.  In light of results such as (\ref{eq:thm1}), one might expect that in general we will have $\norm{ \widehat{\balpha} - \balpha}^2 \approx \norm{\e}^2$. Since we assume that $\R$ satisfies the RIP, we will also have $\norm{\R \balpha}^2 \approx \norm{\balpha}^2$, which would lead to the conclusion that RSNR $\approx$ MSNR.  However, we will now see that we actually can expect that the RSNR will be increased by a factor of roughly $M/W$ compared to the MSNR which corresponds to the potential denoising of white noise that can occur when the signal is known to live in a $W$-dimensional subspace.  The proof of this and many of the more technical remaining theorems can be found in the Appendix.

\begin{thm} \label{thm:expectederror}
Suppose that $\y = \R \balpha + \e$, where $\e \in \real^M$ is a zero-mean, white random vector whose entries have variance $\sigma_{\e}^2$ and $\balpha$ is $\sparsity$-sparse.  Furthermore, suppose that $\R$ satisfies the RIP of order $\sparsity$ with constant $\delta$.  Then the $\widehat{\balpha}$ provided by the oracle-assisted recovery algorithm (\ref{eq:ls2}) satisfies
\begin{equation} \label{eq:thm:expectederror_a}
\frac{\sparsity \sigma_{\e}^2}{1+\delta} \le \expec{ \norm{\widehat{\balpha} - \balpha}^2 } \le \frac{\sparsity \sigma_{\e}^2}{1-\delta}.
\end{equation}
Hence, the RSNR of the oracle-assisted recovery algorithm satisfies
\begin{equation} \label{eq:thm:expectederror_b}
\left( \frac{1-\delta}{1+\delta} \right) \frac{M}{W}  \le \frac{\mathrm{RSNR}}{\mathrm{MSNR}} \le \left( \frac{1+\delta}{1-\delta} \right) \frac{M}{W}.
\end{equation}
\end{thm}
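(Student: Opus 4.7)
The plan is to first compute the oracle-assisted error in closed form, then exploit whiteness of $\e$ to reduce to a trace, and finally apply the RIP.

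First, I would plug $\y = \R_\Lambda \balpha|_\Lambda + \e$ into the oracle formula (\ref{eq:ls2}). The RIP of order $\sparsity$ ensures $\R_\Lambda$ has full column rank, so $\widehat{\balpha}|_\Lambda - \balpha|_\Lambda = \R_\Lambda^\dagger \e$ while $\widehat{\balpha}|_{\Lambda^c} - \balpha|_{\Lambda^c} = 0$. Hence
$$\norm{\widehat{\balpha}-\balpha}^2 = \e^T (\R_\Lambda^\dagger)^T \R_\Lambda^\dagger \,\e.$$
Next I would apply the standard identity for quadratic forms in zero-mean white noise: $\expec{\e^T A \e} = \sigma_\e^2 \trace{A}$ for any deterministic $A$. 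Taking $A = (\R_\Lambda^\dagger)^T \R_\Lambda^\dagger$, substituting $\R_\Lambda^\dagger = (\R_\Lambda^T \R_\Lambda)^{-1}\R_\Lambda^T$, and cycling the trace gives
$$\expec{\norm{\widehat{\balpha}-\balpha}^2} = \sigma_\e^2 \, \trace{(\R_\Lambda^T \R_\Lambda)^{-1}}.$$

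To finish the first claim (\ref{eq:thm:expectederror_a}), I would invoke the RIP of order $\sparsity$: it forces the eigenvalues of $\R_\Lambda^T \R_\Lambda$ into $[1-\delta, 1+\delta]$, so the eigenvalues of its inverse lie in $[1/(1+\delta),\, 1/(1-\delta)]$. Since $\R_\Lambda^T \R_\Lambda$ is $\sparsity \times \sparsity$, summing its $\sparsity$ reciprocal eigenvalues yields $\sparsity/(1+\delta) \le \trace{(\R_\Lambda^T \R_\Lambda)^{-1}} \le \sparsity/(1-\delta)$, which combined with the displayed identity gives (\ref{eq:thm:expectederror_a}).

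For the RSNR/MSNR bound (\ref{eq:thm:expectederror_b}), I would rewrite the ratio as
$$\frac{\mathrm{RSNR}}{\mathrm{MSNR}} = \frac{\norm{\balpha}^2}{\norm{\R\balpha}^2} \cdot \frac{\expec{\norm{\R\n + \e}^2}}{\expec{\norm{\widehat{\balpha}-\balpha}^2}},$$
interpreting the denominator of the RSNR as its expectation. Since $\n = 0$ in this subsection, the numerator on the right simplifies to $\expec{\norm{\e}^2} = \meas \sigma_\e^2$. The RIP bounds the first factor between $1/(1+\delta)$ and $1/(1-\delta)$, and (\ref{eq:thm:expectederror_a}) bounds the reciprocal of the error between $(1-\delta)/(\sparsity \sigma_\e^2)$ and $(1+\delta)/(\sparsity \sigma_\e^2)$. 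Multiplying the two pairs of bounds yields (\ref{eq:thm:expectederror_b}).

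There is no real obstacle; the argument is essentially a calculation. The only subtlety is resisting the temptation to apply the deterministic bound (\ref{eq:thm1}) coordinate-wise — that would give RSNR $\approx$ MSNR and miss the key $\meas/\sparsity$ factor. Instead, one must pass through the trace identity, whose $\sparsity$ (the dimension of $\R_\Lambda^T \R_\Lambda$) versus $\meas$ (the dimension of $\e$) asymmetry is precisely what produces the $\meas/\sparsity$ gain.
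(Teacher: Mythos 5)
Your proposal is correct and follows essentially the same route as the paper: reduce the oracle error to $\sigma_\e^2\,\trace{(\R_\Lambda^\dagger)^T\R_\Lambda^\dagger} = \sigma_\e^2\,\trace{(\R_\Lambda^T\R_\Lambda)^{-1}}$ (the paper phrases this as the Frobenius norm, i.e., the sum of squared singular values of $\R_\Lambda^\dagger$), bound it via the RIP, and then form the RSNR/MSNR ratio exactly as you do. The only difference is that you derive the eigenvalue bounds on $(\R_\Lambda^T\R_\Lambda)^{-1}$ directly from the RIP rather than citing the singular-value lemma the paper imports from~\cite{Daven_Random}, which is an equivalent and self-contained substitute.
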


Again, by bounding the performance of the oracle-assisted recovery algorithm, we provide a best-case analysis of how an algorithm like CoSaMP would perform were it able to correctly identify the true support of $\balpha$.  Intuitively, Theorem~\ref{thm:expectederror} captures the notions that the RSNR should scale with the MSNR and that for a constant MSNR additional measurements should lead to an increasingly accurate recovery.

\subsection{Impact of white signal noise on the MSNR}

We now examine the scenario where the input signal itself is contaminated with noise.  Our chief interest is to understand how $\R$ impacts the signal noise. In order to simplify our analysis, we will make two assumptions concerning $\R$: ({\em i}) the rows of $\R$ are orthogonal and ({\em ii}) each row of $\R$ has equal norm.  Although these assumptions are not necessary to ensure that $\R$ satisfies the RIP, both are rather intuitive.  For example, it seems reasonable that if we wish to take the minimal number of measurements, then each measurement should provide as much new information about the signal as possible, and thus requiring the rows of $\R$ to be orthogonal seems natural.  The second assumption is similarly intuitive and can simply be interpreted as requiring that each measurement have ``equal weight''.  Note that if $\R$ is an orthogonal projection, then it automatically satisfies these properties.  These assumptions also hold for all of the $\R$ matrices corresponding to the practical architectures described in Section~\ref{ssec:csthy}.  Moreover, given an arbitrary matrix $\R$ that satisfies the RIP, it is always possible to construct a matrix $\widetilde{\R}$ that has the same row space as $\R$ and does satisfy these properties, as shown in the following lemma.

\begin{lemma} \label{lem:RIP_orthoprojector}
Suppose that  $\R \in \real^{\meas \times \ambient}$ satisfies the RIP of order $\sparsity$ with constant $\delta$.  There exists a matrix $\widetilde{\R}$ with orthonormal rows and the same row space as $\R$ such that for all $\sparsity$-sparse $\balpha$
\begin{equation} \label{eq:RIP_ortho_bound}
\frac{\sqrt{1-\delta}}{s_{\mathrm{max}}(\R)}  \le \frac{\norm{\widetilde{\R} \balpha }}{\norm{\balpha}} \le \frac{\sqrt{1+\delta}}{s_{\mathrm{min}}(\R)},
\end{equation}
where $s_{\mathrm{max}}(\R)$ denotes the largest and $s_{\mathrm{min}}(\R)$ the smallest nonzero singular values of $\R$.
\end{lemma}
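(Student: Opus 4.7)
The plan is to produce $\widetilde{\R}$ by orthonormalizing the rows of $\R$ via the unique factorization $\R = \A \widetilde{\R}$ in which $\widetilde{\R}$ has orthonormal rows and $\A$ is a positive semidefinite (in fact positive definite, after restricting to the row space) $\meas \times \meas$ matrix. Concretely, I would assume first that $\R$ has full row rank (otherwise I replace $\meas$ by $\mathrm{rank}(\R)$ and carry out the same argument on a maximal linearly independent subset of rows, since only the row space matters), so that $\R \R^T$ is invertible, and then set
\[
\A = (\R \R^T)^{1/2}, \qquad \widetilde{\R} = (\R \R^T)^{-1/2} \R.
\]
A direct computation gives $\widetilde{\R} \widetilde{\R}^T = \I_{\meas}$, so $\widetilde{\R}$ has orthonormal rows, and its row space coincides with that of $\R$ because $(\R \R^T)^{-1/2}$ is invertible.

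Having obtained the factorization $\R = \A \widetilde{\R}$, I would pass the RIP bound through the matrix $\A$. For any $\sparsity$-sparse $\balpha$, the vector $\widetilde{\R} \balpha$ lies in $\real^{\meas}$, and
\[
s_{\mathrm{min}}(\A) \, \norm{\widetilde{\R} \balpha} \le \norm{\A \widetilde{\R} \balpha} = \norm{\R \balpha} \le s_{\mathrm{max}}(\A) \, \norm{\widetilde{\R} \balpha}.
\]
Since $\A = (\R \R^T)^{1/2}$, the singular values of $\A$ are exactly the nonzero singular values of $\R$, so $s_{\mathrm{min}}(\A) = s_{\mathrm{min}}(\R)$ and $s_{\mathrm{max}}(\A) = s_{\mathrm{max}}(\R)$. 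Rearranging the two-sided inequality and combining it with the RIP bound $\sqrt{1-\delta} \, \norm{\balpha} \le \norm{\R \balpha} \le \sqrt{1+\delta} \, \norm{\balpha}$ immediately gives the claimed bound~\eqref{eq:RIP_ortho_bound}.

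The only real subtlety, and what I would flag as the main obstacle, is the full-row-rank assumption on $\R$: a priori, the RIP of order $\sparsity$ does not force $\R$ to have rank $\meas$. I would handle this by noting that we are free to discard linearly dependent rows without changing the row space; the remaining rows still satisfy RIP (with the same or smaller $\delta$, and the same extreme singular values apart from the dropped zero singular values), so the construction above goes through with $\widetilde{\R}$ having $\mathrm{rank}(\R)$ orthonormal rows. Apart from this bookkeeping, the proof is essentially a one-line polar decomposition argument, and I do not anticipate further difficulties.
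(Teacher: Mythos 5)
Your proof is correct, and while the matrix you construct is essentially the same as the paper's, the way you derive the bound is genuinely different and cleaner. The paper takes the reduced SVD $\R = \U\bSigma\V^T$ and sets $\widetilde{\R} = \bSigma^{-1}\U^T\R = \V^T$; your polar-type factorization gives $\widetilde{\R} = (\R\R^T)^{-1/2}\R = \U\V^T$, which differs from the paper's choice only by the left unitary factor $\U$ and hence has identical row space and identical values of $\norm{\widetilde{\R}\balpha}$. Where you diverge is in the key estimate: the paper restricts to the submatrices $\widetilde{\R}_\Lambda$ and invokes an auxiliary eigenvalue lemma (its Lemma~\ref{lem:eigbound}, giving $\lambda_{\mathrm{max}}(\B^T\A\B) \le \lambda_{\mathrm{max}}(\A)\lambda_{\mathrm{max}}(\B^T\B)$ and its min counterpart) to control $\lambda_{\mathrm{max}}(\widetilde{\R}_\Lambda^T\widetilde{\R}_\Lambda)$ and $\lambda_{\mathrm{min}}(\widetilde{\R}_\Lambda^T\widetilde{\R}_\Lambda)$, whereas you simply write $\norm{\R\balpha} = \norm{\A\widetilde{\R}\balpha}$ and sandwich it between $s_{\mathrm{min}}(\A)\norm{\widetilde{\R}\balpha}$ and $s_{\mathrm{max}}(\A)\norm{\widetilde{\R}\balpha}$, then combine with the RIP. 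Your route buys a shorter argument that avoids the auxiliary lemma entirely and works with whole vectors rather than per-support submatrices; the paper's route is no sharper here, but its eigenvalue lemma is stated in a reusable form. Your handling of the full-row-rank issue is also slightly more careful than the paper's, which simply assumes $\mathrm{rank}(\R) = \meas$ in setting up the SVD; note that when $s_{\mathrm{min}}(\R) = 0$ the upper bound in \eqref{eq:RIP_ortho_bound} is vacuous anyway, so your reduction to a maximal independent set of rows is the right bookkeeping.
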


Lemma~\ref{lem:RIP_orthoprojector} constructs matrices $\widetilde{\R}$ with unit-norm rows.  Note that if $\R$ has unit-norm columns and also has orthogonal rows of equal norm, then the rows must have norm $\sqrt{\subsamp}$.  In this case, $s_{\mathrm{max}}(\R) = s_{\mathrm{min}}(\R) = \sqrt{1/\subsamp}$, and thus (\ref{eq:RIP_ortho_bound}) is simply a rescaled version of the RIP (and $\widetilde{\R}$ is a rescaled version of $\R$).  In general we will have $s_{\mathrm{max}}(\R) > s_{\mathrm{min}}(\R)$, in which case Lemma~\ref{lem:RIP_orthoprojector} indicates that the constants in the bound could become slightly worse.  Importantly, however, for randomly generated $\R$ matrices it can be shown that, provided $\meas \ll \ambient$, $s_{\mathrm{max}}(\R) \approx s_{\mathrm{min}}(\R) \approx \sqrt{1/\subsamp}$ with high probability, and so for these $\R$ we should not expect significant degradation.  Thus, without loss of generality we now restrict our attention to matrices $\R$ which have orthogonal rows each of norm $\sqrt{\subsamp}$.  This assumption ensures that if $\n$ is white noise, then $\R \n$ will also be white, allowing us to easily analyze the impact of signal noise in the following theorem.  To isolate the impact of $\n$ we assume that $\e = \boldsymbol{0}$.

\begin{thm} \label{thm:noiseexpansion}
Suppose that $\y = \R( \balpha + \n)$, where $\n \in \real^B$ is a zero-mean, white random vector whose entries have variance $\sigma_{\n}^2$ and $\balpha$ is $\sparsity$-sparse.  Furthermore, suppose that $\R$ satisfies the RIP of order $\sparsity$ with constant $\delta$ and has orthogonal rows, each of norm $\sqrt{\subsamp}$. Then $\R \n$ is also a zero-mean, white random vector whose entries have variance $\sigma_{\R \n}^2 = \subsamp \sigma_{\n}^2$, and hence
\begin{equation} \label{eq:noiseexpansionb}
(1-\delta) \frac{\sparsity}{\ambient} \le \frac{\mathrm{MSNR}}{\mathrm{ISNR}} \le (1+\delta) \frac{\sparsity}{\ambient}.
\end{equation}
\end{thm}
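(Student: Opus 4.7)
The plan is to proceed in three short steps: verify the whiteness of $\R\n$, compute the expected squared norms entering MSNR and ISNR exactly, and then invoke the RIP once on $\|\R\balpha\|^2/\|\balpha\|^2$.

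First I would check the two moment conditions on $\R\n$. Linearity of expectation gives $\expec{\R\n}=\R\,\expec{\n}=0$. For the covariance, $\expec{\R\n\n^T\R^T}=\R(\sigma_{\n}^2\I_{\ambient})\R^T=\sigma_{\n}^2\R\R^T$. The hypothesis that $\R$ has orthogonal rows each of norm $\sqrt{\subsamp}$ means exactly that $\R\R^T=\subsamp\I_{\meas}$, and hence $\expec{\R\n(\R\n)^T}=\subsamp\sigma_{\n}^2\I_{\meas}$, so $\R\n$ is zero-mean and white with entrywise variance $\subsamp\sigma_{\n}^2$, as claimed.

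Next I would compute the two SNRs directly. Since $|\Lambda|=\sparsity$ and $\n|_\Lambda$ is a sub-vector of a white vector with variance $\sigma_{\n}^2$, the ISNR definition in~\eqref{eq:IBSNR} gives $\mathrm{ISNR}=\norm{\balpha}^2/(\sparsity\sigma_{\n}^2)$. With $\e=\boldsymbol{0}$, the MSNR definition~\eqref{eq:MSNR} reduces to $\norm{\R\balpha}^2/\expec{\norm{\R\n}^2}$, and the previous step combined with $\meas\subsamp=\ambient$ yields
\begin{equation*}
\expec{\norm{\R\n}^2}=\trace{\expec{\R\n(\R\n)^T}}=\meas\,\subsamp\,\sigma_{\n}^2=\ambient\,\sigma_{\n}^2.
\end{equation*}
Therefore
\begin{equation*}
\frac{\mathrm{MSNR}}{\mathrm{ISNR}}=\frac{\norm{\R\balpha}^2/(\ambient\sigma_{\n}^2)}{\norm{\balpha}^2/(\sparsity\sigma_{\n}^2)}=\frac{\sparsity}{\ambient}\cdot\frac{\norm{\R\balpha}^2}{\norm{\balpha}^2}.
\end{equation*}

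Finally, because $\balpha$ is $\sparsity$-sparse and $\R$ satisfies the RIP of order $\sparsity$ with constant $\delta$, the inequalities in~\eqref{RIP} give $(1-\delta)\le\norm{\R\balpha}^2/\norm{\balpha}^2\le(1+\delta)$, which immediately yields~\eqref{eq:noiseexpansionb}. There is no real obstacle in this argument; the only subtlety is bookkeeping the in-band convention used in the ISNR definition (the $1/\sparsity$ normalization rather than $1/\ambient$), which is precisely what creates the $\sparsity/\ambient$ factor and makes the noise-folding interpretation transparent.
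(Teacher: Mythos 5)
Your proposal is correct and follows essentially the same route as the paper's proof: verify that $\R\n$ is zero-mean and white via $\R\R^T=\subsamp\I_{\meas}$, reduce $\mathrm{MSNR}/\mathrm{ISNR}$ to $(\sparsity/\ambient)\cdot\norm{\R\balpha}^2/\norm{\balpha}^2$ using $\meas\subsamp=\ambient$, and apply the RIP once. The only difference is presentational (you compute the two SNRs separately before taking the ratio, while the paper manipulates the ratio directly), so there is nothing to add.
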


Thus, although the oracle-assisted recovery procedure served to mildly attenuate white noise added to the measurements, noise added to the signal itself can be highly amplified by the measurement process when $\meas \ll \ambient$ (or $\subsamp \gg 1$).  This is directly analogous to a classical phenomenon known as {\em noise folding}.

\subsection{Noise folding in CS}

Theorem~\ref{thm:noiseexpansion} tells us that the $\R$ matrices used in CS will amplify white noise by a factor of $\subsamp$. We can quantify the impact of noise folding by considering the ratio of the ISNR to the RSNR.  This ratio quantifies the penalty for using CS when the support of $\balpha$ is actually known {\em a priori}.

\begin{thm} \label{thm:noisefolding}
Suppose that $\y = \R( \balpha + \n)$, where $\n \in \real^B$ is a zero-mean, white random vector and $\balpha$ is $\sparsity$-sparse.  Furthermore, suppose that $\R$ satisfies the RIP of order $\sparsity$ with constant $\delta$ and has orthogonal rows, each of norm $\sqrt{\subsamp}$. Then the RSNR of the oracle-assisted recovery algorithm (\ref{eq:ls2}) satisfies
\begin{equation}
\frac{\subsamp}{1+\delta} \le \frac{\mathrm{ISNR}}{\mathrm{RSNR}} \le \frac{\subsamp}{1-\delta}.
\end{equation}
\end{thm}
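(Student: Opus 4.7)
The plan is to obtain Theorem~\ref{thm:noisefolding} essentially as a corollary of Theorems~\ref{thm:expectederror} and~\ref{thm:noiseexpansion} by chaining the two multiplicative bounds through the MSNR. The key observation is that with $\e = \boldsymbol{0}$, the measurement equation becomes $\y = \R\balpha + \R\n$, which has exactly the form treated in Theorem~\ref{thm:expectederror} provided we can identify $\R\n$ with a white measurement noise vector. Theorem~\ref{thm:noiseexpansion} tells us precisely this: because $\R$ has orthogonal rows each of norm $\sqrt{\subsamp}$, the vector $\R\n$ is zero-mean and white with entrywise variance $\subsamp\sigma_{\n}^2$.

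First I would set up the calculation. From the definition \eqref{eq:IBSNR}, since $\n$ is white with variance $\sigma_\n^2$ and $|\Lambda| = \sparsity$, we have $\expec{\norm{\n|_\Lambda}^2} = \sparsity\sigma_\n^2$, so
\begin{equation*}
\mathrm{ISNR} = \frac{\norm{\balpha}^2}{\sparsity\sigma_\n^2}.
\end{equation*}
Next, I would invoke Theorem~\ref{thm:expectederror} applied to the effective measurement-noise vector $\R\n$ (whose per-entry variance is $\subsamp\sigma_\n^2$ by Theorem~\ref{thm:noiseexpansion}). This yields
\begin{equation*}
\frac{\sparsity\,\subsamp\,\sigma_\n^2}{1+\delta} \;\le\; \expec{\norm{\widehat{\balpha}-\balpha}^2} \;\le\; \frac{\sparsity\,\subsamp\,\sigma_\n^2}{1-\delta}.
\end{equation*}

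The final step is to divide. Interpreting the RSNR via the expected squared error, the ratio $\mathrm{ISNR}/\mathrm{RSNR}$ equals $\expec{\norm{\widehat{\balpha}-\balpha}^2}/(\sparsity\sigma_\n^2)$, and the $\sparsity\sigma_\n^2$ factors cancel, leaving $\subsamp/(1\pm\delta)$ as required. Equivalently, one may note that $\mathrm{ISNR}/\mathrm{RSNR} = (\mathrm{ISNR}/\mathrm{MSNR})\cdot(\mathrm{MSNR}/\mathrm{RSNR})$ and use the lower bound of \eqref{eq:noiseexpansionb} with the upper bound of \eqref{eq:thm:expectederror_b} (and vice versa): the $\sparsity/\ambient$ from Theorem~\ref{thm:noiseexpansion} combines with the $\meas/\sparsity$ from Theorem~\ref{thm:expectederror} to yield $\meas/\ambient = 1/\subsamp$, whose reciprocal is $\subsamp$.

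The only subtlety — and the place where a careful reader might object — is the application of Theorem~\ref{thm:expectederror} with $\R\n$ playing the role of $\e$. Theorem~\ref{thm:expectederror} was stated for an arbitrary zero-mean white vector, so I would check that the hypothesis really depends only on the mean and covariance of the noise and not on independence between the noise and the matrix $\R$; given how the proof of Theorem~\ref{thm:expectederror} relies on trace identities applied to $\R_\Lambda^\dagger (\R_\Lambda^\dagger)^T$, this will go through because the needed moment conditions on $\R\n$ are already supplied by Theorem~\ref{thm:noiseexpansion}. Beyond this bookkeeping, there is no real obstacle: the result is a clean telescoping of the two previously established bounds.
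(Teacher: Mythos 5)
Your proof is correct and follows essentially the same route as the paper's: use Theorem~\ref{thm:noiseexpansion} to identify $\R\n$ as white measurement noise with per-entry variance $\subsamp\sigma_{\n}^2$, apply Theorem~\ref{thm:expectederror} to get $\sparsity\subsamp\sigma_{\n}^2/(1\pm\delta)$ bounds on $\expec{\norm{\widehat{\balpha}-\balpha}^2}$, and divide by $\expec{\norm{\n|_\Lambda}^2}=\sparsity\sigma_{\n}^2$; your observation that Theorem~\ref{thm:expectederror} only needs the second-moment structure of the noise (with $\R$ deterministic) is exactly the right justification. One small caveat: the ``equivalently'' aside is not actually equivalent --- chaining \eqref{eq:noiseexpansionb} with \eqref{eq:thm:expectederror_b} yields the slightly worse constants $\subsamp(1-\delta)/(1+\delta)^2$ and $\subsamp(1+\delta)/(1-\delta)^2$ rather than $\subsamp/(1\pm\delta)$, a point the paper itself makes immediately after its proof, so the direct route you give first is the one to keep.
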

\begin{proof}
We begin by observing that
$$
\frac{\mathrm{ISNR}}{\mathrm{RSNR}} = \frac{\expec{\norm{\widehat{\balpha} - \balpha}^2}}{\expec{\norm{\n|_\Lambda}^2}}.
$$
Since $\n$ is white, we have that
\begin{equation} \label{eq:expectedsnrloss1}
\expec{\norm{\n|_\Lambda}^2} = \sparsity \sigma_{\n}^2.
\end{equation}

From Theorem~\ref{thm:noiseexpansion}, we observe that $\y = \R \balpha + \R \n$, where $\R \n$ is a white random vector with $\expec{\R \n (\R \n)^T} = \subsamp \sigma_{\n}^2 \I_{\meas}$.  Since $\R \n$ is white, we apply Theorem~\ref{thm:expectederror} to obtain
\begin{equation} \label{eq:expectedsnrloss2}
\frac{\sparsity \subsamp \sigma_{\n}^2}{1+\delta} \le \expec{\norm{\widehat{\balpha} - \balpha}^2} \le \frac{\sparsity \subsamp \sigma_{\n}^2}{1-\delta}.
\end{equation}
Taking the ratio of (\ref{eq:expectedsnrloss2}) and (\ref{eq:expectedsnrloss1}) and simplifying establishes the theorem.
\end{proof}
Note that this theorem could also be established (with a slightly worse constant) by simply combining~\eqref{eq:thm:expectederror_b} and~\eqref{eq:noiseexpansionb}.

Noise folding has a significant impact on the amount of noise present in CS measurements. Specifically, if we measure the ratio in dB, then we have that
$$
\frac{\mathrm{ISNR}}{\mathrm{RSNR}} \approx 10 \log_{10} \left(\subsamp \right).
$$
Thus, every time we double the subsampling factor $\subsamp$ (a one octave increase), the SNR loss increases by $3$dB. In other words, {\em for the acquisition of a sparse signal in white noise, the RSNR of the recovered signal decreases by 3dB for every octave increase in the amount of subsampling}.

The impact of noise folding on the recovery error in CS systems satisfying the RIP was first reported and analyzed in~\cite{TreicDB_Application}, while further analysis was conducted in~\cite{Daven_Random}.  A closely related issue was also considered in~\cite{AeroSZ_Information}, which studies the impact of signal noise on the problem of exactly recovering the true support of the original signal and reaches similar conclusions.  Specifically,~\cite{AeroSZ_Information} shows that exact support recovery in the presence of signal noise requires that $\meas$ scales linearly with $\ambient$.  Our main result says that in order to achieve a desired RSNR, we will also require that $\meas$ scales linearly with $\ambient$.

During the final preparation of this manuscript similar results were established for systems with bounded {\em coherence}~\cite{CastroYoninaNoiseFold}.  We also note that alternative signal acquisition techniques like {\em bandpass sampling}\footnote{In practice bandpass sampling is not suitable for the typical CS settings.  This is because if there are multiple narrowband signals present in the bandwidth occuring at unknown frequencies, then bandpass sampling causes irreversible aliasing so that the components can potentially overlap and will be impossible to separate.  In contrast to bandpass sampling, CS acquisition preserves sufficient information to enable the recovery of both the values and the locations of the large Fourier coefficients.} (sampling a narrowband signal uniformly at a sub-Nyquist rate to preserve the values but not the locations of its large Fourier coefficients) are affected by an identical $3$dB/octave SNR degradation~\cite{VaughSW_Theory}.  In fact, recent results on minimax rates of estimation in high dimensions imply that, for Gaussian noise, any acquisition system that acquires fewer than $\ambient$ linear measurements will suffer from the same $3$dB/octave SNR loss, regardless of how the measurements are chosen and no matter how sophisticated the recovery algorithm~\cite{RaskuWY_Minimax,CandeD_How}. More recently it has been shown that {\em even if the measurements are chosen adaptively}, this loss cannot be avoided~\cite{CastrCD_Fundamental}.

Noise folding is not restricted to only sparse signals in noise~\cite{DaviesGuo_sample}.  In fact, so-called \emph{compressible signals} are also subject to something akin to noise folding, even when the input contains no actual noise. Specifically, one can often model the ``tail'' of such signals as uncorrelated noise, enabling the application of the results above.  Treating the tail as noise, we would na\"{i}vely arrive at the same $3$dB SNR loss as above.  In fact, the loss resulting from ``tail-folding'' is even worse, since as we increase the amount of subsampling, we will be able to recover less of the signal (the range of $\sparsity$ we can handle will decrease), and thus the ``tail'' will encompass an ever larger portion of the signal.

The $3$dB/octave SNR degradation represents an important tradeoff in the design of CS receivers. This yields the the engineering design rule for CS receivers of $N\!F \approx 10 \log_{10}(\subsamp)$, where $N\!F$ is the noise figure as defined in Section~\ref{sec:reqs}. This result implies that for a fixed signal bandwidth $\sparsity/2$ there is a practical limit to the instantaneous bandwidth $\ambient/2$ for which we can obtain a desired RSNR.  In Section~\ref{sec:tests} we match this theoretical result against the results of multiple simulations.

Although the noise folding behavior of compressive systems imposes a very real cost, this does not necessarily preclude its use in practical systems, one example of which is discussed in Section~\ref{sec:conc}.  Conversely, the dramatic sampling rate reduction enabled by CS can lead, in some cases, to significant improvements in the dynamic range of the system.  This issue is examined in the next section.

\section{Dynamic Range of CS-Based Acquisition Systems}
\label{sec:dr}

A fundamental advantage of CS is that it enables a significantly lower sampling rate for sparse signals, which in turn enables the use of higher-resolution ADCs~\cite{LeRonRee::2005::Analog-to-Digital-Converters}.  By exploiting this fact, a CS acquisition system should exhibit a significantly larger dynamic range than a conventional system.  In this section, we provide a theoretical justification for this claim, and in Section~\ref{sec:tests} we will quantify the potential gain via the empirical relationship between sampling rate and quantizer resolution~\cite{LeRonRee::2005::Analog-to-Digital-Converters}.

We begin our analysis by first providing a rigorous and general definition of dynamic range.  Roughly, we define the dynamic range as the ratio of the maximum to the minimum signal power levels that can be handled with ``full fidelity''.\footnote{In this section we are analyzing the CS-based receiver's dynamic range {\em as a system}.  This should not be confused with the dynamic range {\em of a signal}, which in our framework could be quantified as the ratio of the largest to smallest entry in $\balpha$.  An examination of how CS impacts the maximum allowable {\em signal} dynamic range would be worthwhile, but it is not our focus in this paper.  We do note, however, that some results in the present section could play a key role in any such analysis.}  In order to make this notion precise, we will ignore the effects of any noise or nonlinearities from the other ADC components and examine only the impact of the quantizer.  This is a fair assumption, since a key goal in the design of an ADC is that the quantizer be the only component that limits the device's dynamic range.

Our definition of dynamic range has two properties that aid us in the analysis of CS systems: ({\em i}) the dynamic range does not depend on a stochastic quantization error model, and ({\em ii}) any reduction of quantization error yields a corresponding improvement in dynamic range, i.e., the dynamic range of the quantizer effectively determines the dynamic range of the system.  With this definition in hand, we examine quantization in both conventional and CS systems and provide lower bounds on the dynamic range of each.  Our key finding will be that, all things being equal, the dynamic range of a CS acquisition system is generally no worse than that of a conventional system.  Thus, since CS enables lower sampling rates for sparse signals, we can employ a higher-resolution ADC and attain a larger dynamic range.

\subsection{A deterministic approach to dynamic range}
\label{subsec:dr}

\begin{figure}[tbp]
   \centering
   \includegraphics[width=2in]{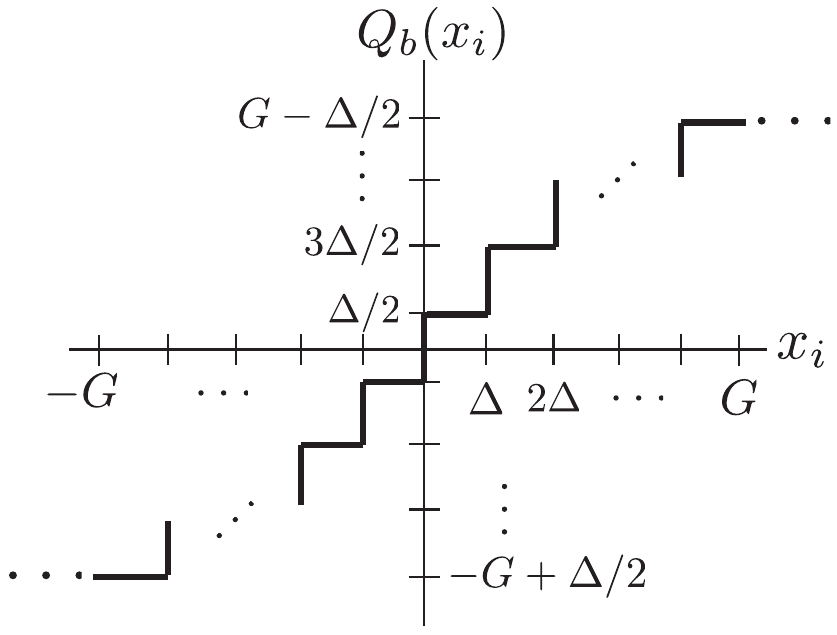}
   \caption{A midrise uniform quantization function $Q_{b}(x_{i})$ with $b$ bits, saturation level $G$, and quantization interval $\Delta = 2^{-b+1}G$.}
   \label{fig:midrisequant}
\end{figure}

To formulate our definition of dynamic range, we first analyze the error induced by quantizing $\x$.
We assume that $\x$ is a vector in $\real^\ambient$ and compare $\x$ with the $b$-bit quantized version of $\x$, which we denote by $\quant{\x}$.  Let $\Delta$ denote the quantization interval, and let $\pm G$ denote the saturation levels, so that $G = \Delta 2^{b-1}$.  Note that if $|x_i| \le G$, then we have that $| x_i - \quant{x_i}| \le \Delta/2$, but if $|x_i| > G$ then $|x_i - \quant{x_i}| = |x_i|- (G-\Delta/2)$.  A midrise uniform quantization function $Q_{b}(x_{i})$ is depicted in Fig.~\ref{fig:midrisequant}.

For a given $\x$, we define the {\em signal-to-quantization noise ratio} (SQNR) of the quantizer as
\begin{equation} \label{eq:SQNR}
\mathrm{SQNR}(\x) = \frac{\norm{\x}^2}{\norm{\x - \quant{\x}}^2}.
\end{equation}
We make the dependence of the SQNR on $\x$ explicit, since our definition of dynamic range will be based on how the scaling of $\x$ affects the SQNR.  First, however, we establish a practical bound on the best SQNR attainable for a given $G$, $\Delta$, and $\x$.
\begin{lemma} \label{lem:betaopt}
Let $\x \in \real^\ambient$ be arbitrary.  There always exists a $\beta > 0$ such that
\begin{equation} \label{eq:SQNRformula}
\mathrm{SQNR}(\beta \x) \ge \frac{1}{\gamma(\x)^2} \left(  \frac{2G}{\Delta} \right)^2,
\end{equation}
where
\begin{equation} \label{eq:gammadef}
\gamma(\x) = \frac{\norm[\infty]{\x}}{\norm{\x}/\sqrt{\ambient}}.
\end{equation}
\end{lemma}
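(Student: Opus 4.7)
The plan is to make the natural choice of $\beta$ that packs as much of $\x$ into the quantizer's range as possible without saturating any coordinate. The per-coordinate quantization error behaves discontinuously at the saturation level: it jumps from at most $\Delta/2$ inside $[-G,G]$ to an arbitrarily large clipping error as soon as $|\beta x_i|$ exceeds $G$. Thus the sensible strategy is to set $\beta$ so that the largest-magnitude coordinate of $\beta\x$ just reaches the saturation boundary.

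Concretely, I would take $\beta = G/\norm[\infty]{\x}$, so that $\norm[\infty]{\beta\x} = G$ and therefore every entry of $\beta\x$ lies in $[-G,G]$. Under this no-saturation condition, the midrise uniform quantizer depicted in Fig.~\ref{fig:midrisequant} guarantees $|\beta x_i - \quant{\beta x_i}| \le \Delta/2$ for every $i$, and summing the squares over the $\ambient$ coordinates yields
\[
\norm{\beta\x - \quant{\beta\x}}^2 \;\le\; \ambient\,\left(\frac{\Delta}{2}\right)^2 \;=\; \frac{\ambient\,\Delta^2}{4}.
\]
The numerator of the SQNR is simply $\norm{\beta\x}^2 = \beta^2 \norm{\x}^2 = G^2\,\norm{\x}^2/\norm[\infty]{\x}^2$, so substituting both expressions into (\ref{eq:SQNR}) gives
\[
\mathrm{SQNR}(\beta\x) \;\ge\; \frac{4 G^2\,\norm{\x}^2}{\ambient\,\Delta^2\,\norm[\infty]{\x}^2}.
\]

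Finally, I would recognize that by the definition of $\gamma(\x)$ in (\ref{eq:gammadef}) one has $\norm{\x}^2/(\ambient\,\norm[\infty]{\x}^2) = 1/\gamma(\x)^2$, so the displayed bound rearranges immediately to $(2G/\Delta)^2/\gamma(\x)^2$, which is (\ref{eq:SQNRformula}). There is no substantive obstacle here — the lemma is essentially a bookkeeping exercise — but the one conceptual point worth highlighting is the role of the saturation cutoff: the bound $|\beta x_i - \quant{\beta x_i}|\le \Delta/2$ relies crucially on $\beta \le G/\norm[\infty]{\x}$, and pushing $\beta$ any larger in an attempt to increase the signal energy would introduce clipping errors that destroy the estimate. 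The choice $\beta = G/\norm[\infty]{\x}$ is precisely the largest permissible gain, which is why it produces the stated lower bound.
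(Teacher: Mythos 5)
Your proposal is correct and follows essentially the same route as the paper's proof: choose $\beta = G/\norm[\infty]{\x}$ so that no entry saturates, bound the quantization error by $\ambient(\Delta/2)^2$, compute $\norm{\beta\x}^2 = G^2\norm{\x}^2/\norm[\infty]{\x}^2$, and simplify using the definition of $\gamma(\x)$. Your added remark about why $\beta$ cannot be pushed past the saturation boundary is a correct observation but not needed for the lemma itself.
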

The quantity $\gamma$ in (\ref{eq:gammadef}) is known as the {\em peak-to-average ratio} (PAR) of $\x$. Also known as the {\em crest factor} or {\em loading factor}~\cite{Lyons_Understanding}, it is a measure of the ratio between a signal's ``average energy'' to its peak value.

While the expression in (\ref{eq:SQNRformula}) may look foreign to some, this bound is similar to standard results for peak SQNR.  Recall that $2G/\Delta = 2^b$.  Thus, if we express (\ref{eq:SQNRformula}) in dB, then we observe that by setting $\beta$ appropriately we can obtain
\begin{align}
\mathrm{SQNR}(\beta \x) & \geq  20 b \log_{10}(2) - 20 \log_{10} (\gamma(\x)) \notag \\
& \gtrsim 6.02 b - 20 \log_{10} (\gamma(\x)). \label{eq:SQNRdb}
\end{align}
This corresponds to the well-known result that the peak SQNR grows by approximately $6$dB per quantizer bit~\cite{Lyons_Understanding}.  Furthermore, although the SQNR bound in (\ref{eq:SQNRdb}) provides only a {\em lower bound} on the SQNR, it generally agrees with the results in the literature that assume probabilistic models on the signal $\x$ and/or the quantization noise.  For example, a more conventional probabilistic analysis would assume that the quantization noise has a uniform distribution.  In this case, one can derive the expression
$$
\expec{\mathrm{SQNR}(\beta \x)} \approx 6.02 b - 20 \log_{10}(\gamma(\x)) + 4.77,
$$
where the additive constant $4.77$  reflects the improvement made possible over our worst-case bound by placing a uniform distribution on the quantization noise~\cite{Lyons_Understanding}.  For our purposes below, a lower bound on the SQNR is sufficient.  We view the deterministic nature of our bound as a strength, allowing us to avoid any questionable assumptions concerning the quantization noise distribution.

We now show how we can use the SQNR to offer a concrete definition for dynamic range.  Specifically, suppose that we would like to achieve an SQNR of at least $C$.\footnote{In our analysis we consider $C \in \left( 1, (2G/\Delta)^2/\gamma(\x)^2 \right]$ to ensure that our definition leads to a meaningful notion of dynamic range.  Specifically, once we fix $\Delta$ and $G$, there is an upper limit on the SQNR we can hope to achieve, and for $C$ beyond that limit the dynamic range will be ill-defined.  Similarly, if we set $C=1$ then one can easily achieve infinite dynamic range by quantizing all signals to zero. However, for the range of $C$ considered we can always set  $\beta = G/\norm[\infty]{\x}$ to ensure that $\mathrm{SQNR}(\beta \x) \ge C$ (see the proof of Lemma~\ref{lem:betaopt}).}  We aim to identify the range of scalings $\beta$ of a given signal $\x$ for which $\mathrm{SQNR}(\beta \x) \ge C$. More formally, we want to find scalars $\betamin(\x)$ and $\betamax(\x)$ such that $\mathrm{SQNR}(\beta \x) \ge C$ for all $\beta \in \left[\betamin(\x),\betamax(\x) \right]$, where
\begin{equation}
\betamin(\x) \le G/\norm[\infty]{\x} \le \betamax(\x).
\end{equation}
In words, $\betamax(\x)$ and $\betamin(\x)$ define a range of scalings over which we achieve the desired SQNR $C$.

Using $\betamax(\x)$ and $\betamin(\x)$, we define the dynamic range of a conventional acquisition system as
\begin{equation} \label{eq:DRDEF}
\mathrm{DR}_C(\x) := \left( \frac{\betamax(\x)}{\betamin(\x)} \right)^2.
\end{equation}
Hence, the dynamic range of a conventional ADC is the ratio of the maximum input scaling to the minimum input scaling of $\x$ such that for both scalings the SQNR is at least $C$. Note that this definition implicitly assumes that the entire vector $\x$ is scaled by a single constant that does not dynamically change with time, which might seem to disallow the possibility of automatic gain control (AGC).  However, in practice the time constant of any pre-quantizer AGC circuit will likely be many orders of magnitude larger than the block-length, so the use of an AGC is largely irrelevant to our analysis.

At first sight, (\ref{eq:DRDEF}) may appear to be a rather complicated way of describing what is at heart an elementary concept --- dynamic range is often simply quantified as the ratio of the largest to smallest quantization levels.  However, the strength of this definition is that it can easily be extended to quantify the dynamic range of a CS-based ADC in which the measurement and recovery processes obscure the impact of finite-range quantization on the final RSNR. Specifically, given an input signal $\x$ (or equivalently $\balpha$) we apply a recovery algorithm to the quantized CS measurements $\quant{\y} = \quant{\R \balpha}$ to obtain a recovery $\widehat{\balpha}$.  We wish to understand the impact of this quantization on the resulting RSNR.  While it might not otherwise be immediately apparent, (\ref{eq:DRDEF}) suggests a natural way to extend the definition of dynamic range to the CS setting by simply replacing RSNR with SQNR, i.e., defining $\betamin(\x)$ and $\betamax(\x)$ by considering the range of scalars $\beta$ such that $\mathrm{RSNR}(\beta \x) \ge C$.  Note that for a conventional ADC, since $\mathrm{RSNR}=\mathrm{SQNR}$, the definition remains unchanged from (\ref{eq:DRDEF}).  We now analyze the dynamic range of a conventional acquisition system in Section~\ref{ssec:convDR} and then extend this to the CS setting in Section~\ref{ssec:CSDR}.

\subsection{Dynamic range of a conventional ADC}
\label{ssec:convDR}

We first provide a simple bound on the dynamic range $\mathrm{DR}_C(\x)$ for a conventional ADC.
\begin{thm}
\label{thm:dr}
The dynamic range of a quantizer as defined by (\ref{eq:DRDEF}) is bounded by
\begin{equation} \label{eq:DR1}
\mathrm{DR}_{C}(\x) \ge \frac{1}{C \gamma(\x)^2 - 1} \left( \left(\frac{2G}{\Delta}\right)^{2} -1 \right),
\end{equation}
where $\gamma(\x)$ is defined as in (\ref{eq:gammadef}).
\end{thm}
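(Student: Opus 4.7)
The plan is to exhibit an explicit admissible pair $\bigl(\betamin(\x), \betamax(\x)\bigr)$ satisfying $\betamin \le G/\norm[\infty]{\x} \le \betamax$ with $\mathrm{SQNR}(\beta\x) \ge C$ for every $\beta$ in between, and then read off the lower bound on $(\betamax/\betamin)^2$.

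For the upper endpoint I would take $\betamax = G/\norm[\infty]{\x}$, exactly the scaling used in the proof of Lemma~\ref{lem:betaopt}. With this choice no coordinate of $\betamax\x$ is clipped, and Lemma~\ref{lem:betaopt} directly yields $\mathrm{SQNR}(\betamax\x) \ge (2G/\Delta)^2/\gamma(\x)^2$; by the standing hypothesis that $C$ lies in the range for which $\mathrm{DR}_C$ is well-defined, this is at least $C$. For the lower endpoint I would use the worst-case quantization bound $\norm{\beta\x - Q_b(\beta\x)}^2 \le \ambient(\Delta/2)^2$, which is valid for every $\beta \in (0, \betamax]$ since $\beta\norm[\infty]{\x} \le G$ throughout that interval. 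Plugging this into $\mathrm{SQNR}(\beta\x) \ge C$ gives the sufficient inequality $\beta^2 \norm{\x}^2 \ge C\ambient\Delta^2/4$, which is monotone in $\beta$; hence taking $\betamin$ to be the smallest $\beta$ meeting it guarantees $\mathrm{SQNR}(\beta\x) \ge C$ throughout $[\betamin, \betamax]$. The containment $\betamin \le G/\norm[\infty]{\x}$ reduces to $C\gamma(\x)^2 \le (2G/\Delta)^2$, which is again the valid range of $C$.

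Computing $(\betamax/\betamin)^2$ using $G^2 = (2G/\Delta)^2\Delta^2/4$ and $\gamma(\x)^2 = \ambient\norm[\infty]{\x}^2/\norm{\x}^2$ yields a closed-form ratio in terms of $(2G/\Delta)^2$ and $C\gamma(\x)^2$. The main remaining task is then to rearrange this into the exact form $\bigl((2G/\Delta)^2 - 1\bigr)/\bigl(C\gamma(\x)^2 - 1\bigr)$ stated in the theorem. The naive argument above only yields the slightly weaker $(2G/\Delta)^2/(C\gamma(\x)^2)$, and the $-1$ corrections on both numerator and denominator reflect a tighter accounting of the endpoint behavior. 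One route is to push $\betamax$ marginally past $G/\norm[\infty]{\x}$ into a controlled saturation regime, where the largest-coordinate error becomes $\betamax\norm[\infty]{\x} - G + \Delta/2$ and the resulting quadratic-in-$\beta$ SQNR inequality, when saturated at $\ge C$, reproduces the claimed bound. An equivalent route is to sharpen the error bound via the midrise identity $|Q_b(\beta x_i)| \ge \Delta/2$ used inside the decomposition $\norm{\beta\x - Q_b(\beta\x)}^2 = \norm{\beta\x}^2 - 2\langle \beta\x, Q_b(\beta\x)\rangle + \norm{Q_b(\beta\x)}^2$. This algebraic sharpening of the endpoint estimate is the main technical obstacle; once it is in place, the desired bound follows by routine simplification.
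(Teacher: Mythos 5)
Your $\betamin$ is exactly the paper's: in the non-saturating regime the worst-case error bound $\norm{\beta\x - \quant{\beta\x}}^2 \le \ambient(\Delta/2)^2$ gives $\betamin(\x)^2 = C\ambient(\Delta/2)^2/\norm{\x}^2$. But the content of the theorem lives in the $\betamax$ endpoint, and there your proposal stops short. As you yourself observe, taking $\betamax = G/\norm[\infty]{\x}$ only yields $(2G/\Delta)^2/(C\gamma(\x)^2)$, which in the valid range $1 < C\gamma(\x)^2 \le (2G/\Delta)^2$ is a strictly \emph{weaker} lower bound than the claimed $\bigl((2G/\Delta)^2-1\bigr)/\bigl(C\gamma(\x)^2-1\bigr)$. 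You correctly guess that one must push $\betamax$ into the saturation regime, but you leave the execution as ``the main technical obstacle,'' and that obstacle is precisely the proof. Two concrete steps are missing. First, you need a bound on the \emph{total} quantization error when some coordinates clip, not just on the largest-coordinate error: the paper bounds every coordinate's error by that of the worst one, giving $\norm{\beta\x - \quant{\beta\x}}^2 \le \ambient\bigl(\beta\norm[\infty]{\x} - \overline{G}\bigr)^2$ with $\overline{G} = G - \Delta/2$. Second, and this is where the $-1$ corrections actually come from, you need the relaxation
\begin{equation*}
\bigl(\beta\norm[\infty]{\x} - \overline{G}\bigr)^2 \le \beta^2\norm[\infty]{\x}^2 - G^2 + (\Delta/2)^2,
\end{equation*}
obtained by using $\beta\norm[\infty]{\x} > G$ in the cross term $-2\overline{G}\beta\norm[\infty]{\x} \le -2\overline{G}G$. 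This converts the SQNR constraint into $\beta^2\bigl(C\ambient\norm[\infty]{\x}^2 - \norm{\x}^2\bigr) < C\ambient\bigl(G^2 - (\Delta/2)^2\bigr)$, whence $\betamax(\x)^2 = \frac{C\ambient}{\norm{\x}^2}\cdot\frac{G^2 - (\Delta/2)^2}{C\gamma(\x)^2 - 1}$, and dividing by $\betamin(\x)^2$ gives exactly the stated bound. Your alternative route via the expansion of $\norm{\beta\x - \quant{\beta\x}}^2$ and the inequality $|\quant{\beta x_i}| \ge \Delta/2$ is not developed and does not obviously lead anywhere. So the skeleton is right and half of it coincides with the paper, but as written the proposal proves only the weaker bound; the saturation-regime computation must be supplied before the theorem is established.
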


For large $b$, $(2G/\Delta)^2 - 1 \approx 2^{2b}$, and so by expressing (\ref{eq:DR1}) in dB we obtain
\begin{equation} \label{eq:DR1db}
\mathrm{DR}_C(\x) \gtrsim 6.02b - 10\log_{10} \left( C \gamma(\x)^2 - 1 \right).
\end{equation}
This coincides with the familiar rule of thumb that just like the SQNR in (\ref{eq:SQNRdb}), ADC dynamic range increases by $6$dB per quantizer bit~\cite{Lyons_Understanding}. Note, however, that we again have an additive constant that here depends both on the targeted SQNR $C$ as well as the PAR $\gamma(\x)$.  This is again expected, since a more ambitious SQNR is more difficult to achieve and since a signal with higher PAR is harder to quantize, both of which lead to a more limited dynamic range.  We revisit the issue of PAR below in Section~\ref{ssec:PAR}.

In summary, our definition of dynamic range (\ref{eq:DRDEF}) yields a reasonable expression (\ref{eq:DR1}) for a conventional ADC that coincides with the traditional ``folk wisdom'' on dynamic range.

\subsection{Dynamic range of a CS-based acquisition system}
\label{ssec:CSDR}

Thus far we have proposed a rigorous and general definition of dynamic range and analyzed a conventional ADC in this context.  We now extend this analysis to the CS case.  Our argument proceeds by first showing that we can always relate $\mathrm{RSNR}(\beta \x )$ to $\mathrm{SQNR}(\beta \y)$ and then relate $\mathrm{SQNR}(\beta \y)$ to $\mathrm{SQNR}(\beta \x)$.  This allows us to argue that, whenever $\mathrm{SQNR}(\beta \x) > C$, we have that $\mathrm{RSNR}(\beta \x) > C'$ for some $C'$.  In other words, whenever we can achieve a certain SQNR $C$ by directly quantizing $\x$, a CS-based system can also achieve the RSNR $C'$ (where $C'$ is typically comparable to $C$).  Thus, the dynamic range of these systems will be essentially the same. We begin by relating $\mathrm{RSNR}(\beta \x )$ to $\mathrm{SQNR}(\beta \y)$.
\begin{lemma} \label{lem:QSNRCS}
Suppose that $\y = \R \balpha$, where $\balpha$ is $\sparsity$-sparse and $\R$ satisfies the RIP of order $\sparsity$ with constant $\delta$.  Let $\widehat{\balpha}$ denote the output of applying a recovery algorithm to the quantized measurements $\quant{\y}$ which satisfies a recovery guarantee like that given in (\ref{eq:thm1}), i.e.,
\begin{equation} \label{eq:QSNRCS1}
\norm{\widehat{\balpha} - \balpha}^2 \le \kappa_1^2 \norm{\quant{\y} - \y}^2.
\end{equation}
Then,
\begin{equation} \label{eq:QSNRCS2}
\mathrm{RSNR}(\beta \x) = \mathrm{RSNR}(\beta \balpha) \ge \frac{\mathrm{SQNR}(\beta \y)}{(1+\delta)\kappa_1^2}.
\end{equation}
\end{lemma}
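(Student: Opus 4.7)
The plan is to chain together two elementary estimates: (i) the recovery guarantee (\ref{eq:QSNRCS1}) bounds $\|\widehat{\balpha}-\balpha\|^2$ by the quantization error $\|Q_b(\y)-\y\|^2$ on the measurement side, and (ii) the RIP upper bound relates $\|\y\|^2=\|\R\balpha\|^2$ back to $\|\balpha\|^2$. Together these let us trade off a signal-domain norm for a measurement-domain norm in the numerator of the RSNR while the denominator already lives in the measurement domain via the recovery guarantee.

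First I would note the trivial first equality $\mathrm{RSNR}(\beta\x)=\mathrm{RSNR}(\beta\balpha)$: since $\x$ and $\balpha$ are related by a unitary change of basis (sampling the Fourier basis functions), $\|\beta\x\|=\|\beta\balpha\|$ and $\|\widehat{\beta\x}-\beta\x\|=\|\widehat{\beta\balpha}-\beta\balpha\|$, so the two ratios agree. Next I would apply the recovery guarantee to the scaled problem. Acquiring $\beta\x$ yields measurements $\R(\beta\balpha)=\beta\y$, which are then quantized to $Q_b(\beta\y)$; feeding these into the recovery algorithm produces an estimate that, by (\ref{eq:QSNRCS1}) applied with $\beta\balpha$ and $\beta\y$ in place of $\balpha$ and $\y$, satisfies
\[
\|\widehat{\beta\balpha}-\beta\balpha\|^2 \;\le\; \kappa_1^2\,\|Q_b(\beta\y)-\beta\y\|^2.
\]

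Second I would invoke the RIP of order $\sparsity$: since $\balpha$ is $\sparsity$-sparse, $\|\y\|^2=\|\R\balpha\|^2\le(1+\delta)\|\balpha\|^2$, and scaling by $\beta$ preserves the ratio, giving $\|\beta\balpha\|^2 \ge \|\beta\y\|^2/(1+\delta)$. Dividing the two displayed inequalities yields
\[
\mathrm{RSNR}(\beta\x) \;=\; \frac{\|\beta\balpha\|^2}{\|\widehat{\beta\balpha}-\beta\balpha\|^2} \;\ge\; \frac{\|\beta\y\|^2/(1+\delta)}{\kappa_1^2\,\|Q_b(\beta\y)-\beta\y\|^2} \;=\; \frac{\mathrm{SQNR}(\beta\y)}{(1+\delta)\kappa_1^2},
\]
which is the claim.

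There is no serious obstacle here — the proof is a two-line chaining argument, and the only point requiring any care is remembering that quantization is nonlinear, so $Q_b(\beta\y)$ cannot be simplified to $\beta\, Q_b(\y)$; the recovery guarantee must be invoked directly at the scaled measurement vector $\beta\y$. The factor $(1+\delta)$ arises solely from the upper side of the RIP, and the factor $\kappa_1^2$ is simply inherited from the recovery algorithm's stability constant.
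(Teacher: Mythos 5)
Your proposal is correct and follows essentially the same argument as the paper: apply the recovery guarantee to bound $\norm{\widehat{\balpha}-\balpha}^2$ by $\kappa_1^2\norm{\quant{\y}-\y}^2$, use the upper RIP inequality to replace $\norm{\balpha}^2$ with $\norm{\y}^2/(1+\delta)$, and divide. The paper simply disposes of the scaling by assuming $\beta=1$ without loss of generality, whereas you carry $\beta$ through explicitly; your caution about the nonlinearity of $Q_b$ is a fair (if minor) refinement of the same argument.
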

\begin{proof}
Without loss of generality, suppose that $\beta = 1$.  From the RIP we have that
$$
\norm{\balpha}^2 \ge \frac{\norm{\R \balpha}^2}{1+\delta}.
$$
Combining this with (\ref{eq:QSNRCS1}), we obtain the bound
\begin{align*}
\mathrm{RSNR}(\x) = \mathrm{RSNR}(\balpha) & = \frac{\norm{\balpha}^2}{\norm{\widehat{\balpha} - \balpha}^2} \\
& \ge \frac{\norm{\y}^2}{(1+\delta)\kappa_1^2 \norm{\quant{ \y} - \y}^2} \\
& = \frac{\mathrm{SQNR}(\y)}{(1+\delta)\kappa_1^2},
\end{align*}
which completes the proof.
\end{proof}

In words, $\mathrm{RSNR}(\beta \x )$ is lower bounded by a constant multiple of $\mathrm{SQNR}(\beta \y)$.  This means that we can expect the RSNR to follow the same trend as the SQNR of the measurements.  Thus, we can restrict our analysis and comparisons to the measurement SQNR. Hence, we can compare $\mathrm{SQNR}( \beta \y)$ to $\mathrm{SQNR}(\beta \x)$ from Section~\ref{subsec:dr}.  The following lemma shows that we can bound $\mathrm{SQNR}( \beta \y)$ in a manner similar to how Lemma~\ref{lem:betaopt} bounds $\mathrm{SQNR}(\beta \x)$.
\begin{lemma}
Suppose that $\y = \R \balpha = \bPhi \x$, where $\balpha$ is $\sparsity$-sparse and $\R$ satisfies the RIP of order $\sparsity$ with constant $\delta$.  Then there exists a $\beta$ such that
$$
\mathrm{SQNR}(\beta \y) \ge (1-\delta) \rho \frac{\norm[\infty]{\x}^2}{\norm[\infty]{\y}^2} \frac{1}{\gamma(\x)^2} \left(  \frac{2G}{\Delta} \right)^2.
$$
\end{lemma}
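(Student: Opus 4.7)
The plan is to invoke Lemma~\ref{lem:betaopt} directly on the measurement vector $\y \in \real^\meas$, since that lemma is dimension-agnostic and applies to any real vector. For a suitable choice of $\beta$ it gives
$$
\mathrm{SQNR}(\beta \y) \;\ge\; \frac{1}{\gamma(\y)^2}\left(\frac{2G}{\Delta}\right)^2 \;=\; \frac{\norm{\y}^2}{\meas\, \norm[\infty]{\y}^2}\left(\frac{2G}{\Delta}\right)^2,
$$
where in the second equality I have unpacked $\gamma(\y) = \norm[\infty]{\y}/(\norm{\y}/\sqrt{\meas})$ using the fact that $\y$ lives in $\real^\meas$, not $\real^\ambient$. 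This is the only place the lemma is used; the rest of the proof is translating the right-hand side from the measurement side to the signal side.

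Next I would invoke the RIP to bring the signal norms into play. Since $\y = \R\balpha$ with $\balpha$ being $\sparsity$-sparse, the lower RIP inequality yields $\norm{\y}^2 \ge (1-\delta)\norm{\balpha}^2$; because the Fourier basis $\bPsi$ is orthonormal and $\x = \bPsi\balpha$, we also have $\norm{\balpha} = \norm{\x}$. Plugging both facts into the previous display gives
$$
\mathrm{SQNR}(\beta \y) \;\ge\; \frac{(1-\delta)\, \norm{\x}^2}{\meas\, \norm[\infty]{\y}^2}\left(\frac{2G}{\Delta}\right)^2.
$$

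The final step is algebraic rearrangement. Substituting $\meas = \ambient/\subsamp$ turns $1/\meas$ into $\subsamp/\ambient$, and then multiplying and dividing the resulting expression by $\norm[\infty]{\x}^2$ lets me recognize the group $\norm{\x}^2/(\ambient\, \norm[\infty]{\x}^2)$ as $1/\gamma(\x)^2$ by \eqref{eq:gammadef}, while leaving the ratio $\norm[\infty]{\x}^2/\norm[\infty]{\y}^2$ intact as a separate factor. Collecting everything yields exactly the stated inequality. I do not foresee any serious obstacle here beyond careful bookkeeping: the one subtlety is that $\gamma(\cdot)$ is defined relative to the ambient dimension of its argument (so $\gamma(\y)$ carries a $\sqrt{\meas}$ while $\gamma(\x)$ carries a $\sqrt{\ambient}$), and it is precisely the mismatch between these two scales, combined with the RIP constant, that generates the $(1-\delta)\subsamp$ prefactor appearing in the conclusion.
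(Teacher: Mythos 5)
Your proposal is correct and follows essentially the same route as the paper: apply Lemma~\ref{lem:betaopt} to $\y$ with the dimension-$\meas$ normalization of the PAR, use the lower RIP bound together with Parseval to replace $\norm{\y}^2$ by $(1-\delta)\norm{\x}^2$, and then regroup factors to expose $1/\gamma(\x)^2$ and the $\rho\,\norm[\infty]{\x}^2/\norm[\infty]{\y}^2$ prefactor. Your explicit remark about the $\sqrt{\meas}$ versus $\sqrt{\ambient}$ normalization of $\gamma(\cdot)$ is exactly the bookkeeping the paper performs implicitly.
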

\begin{proof}
We begin by noting that from Lemma~\ref{lem:betaopt}, $\beta = G/\norm[\infty]{\y}$ we have that
$$
\mathrm{SQNR}(\beta \y) \ge \left( \frac{\norm{\y}^2/\meas}{\norm[\infty]{\y}^{2}} \right) \left( \frac{2G}{\Delta} \right)^2.
$$
Since $\R$ satisfies the RIP we obtain
$$
\norm{\y}^2 \ge (1-\delta) \norm{\balpha}^2 = (1-\delta) \norm{\x}^2.
$$
Thus we have that
\begin{align*}
\frac{\norm{\y}^2/\meas}{\norm[\infty]{\y}^{2}} & \ge \frac{(1-\delta) \norm{\x}^2}{\norm[\infty]{\y}^2} \\
& =  (1-\delta) \left( \frac{\ambient}{\meas} \right) \frac{\norm[\infty]{\x}^2}{\norm[\infty]{\y}^2}  \left(\frac{\norm{\x}^2/\ambient}{\norm[\infty]{\x}^2} \right) \\
& = (1-\delta)\rho \frac{\norm[\infty]{\x}^2}{\norm[\infty]{\y}^2} \frac{1}{\gamma(\x)^2},
\end{align*}
which establishes the lemma.
\end{proof}

Thus, CS has the same $6$dB per quantizer bit behavior as in (\ref{eq:SQNRdb}) with
\begin{equation} \label{eq:CSSQNRdb}
\mathrm{SQNR}(\beta \y) \gtrsim 6.02 b + 20 \log_{10} \left( \frac{\sqrt{(1-\delta)\rho} \norm[\infty]{\x}}{\gamma(\x) \norm[\infty]{\y}} \right),
\end{equation}
the only difference being an additional additive constant that we will analyze in more detail in Section~\ref{ssec:PAR}.

We are now ready to compute the dynamic range of the CS acquisition system. We retain the same definition of dynamic range as in (\ref{eq:DRDEF}), but with $\betamax(\x)$ and $\betamin(\x)$ defined by substituting the SQNR constraint with the requirement that $\mathrm{RSNR}(\beta \x) \ge C$.  In this setting, we can repeat the same analysis as in Theorem~\ref{thm:dr} to obtain
$$
\mathrm{DR}_{C}(\x) \ge \frac{1}{C'\gamma(\x)^2 - 1} \left( \left(\frac{2G}{\Delta}\right)^{2} -1 \right),
$$
where
$$
C' = \frac{1-\delta}{(1+\delta)\kappa_1^2} \rho \frac{\norm[\infty]{\x}^2}{\norm[\infty]{\y}^2}.
$$
Thus, when measured in dB the dynamic range is affected by CS only through an additive constant.

In practice, we can take significant advantage of the fact that, all things being equal, a CS system has the same dynamic range as a conventional Nyquist ADC.  Specifically, because the ADC employed in a CS-based system operates at a significantly lower rate than would be required in a conventional system, a slower quantizer with higher bit-depth can be employed~\cite{LeRonRee::2005::Analog-to-Digital-Converters}.   If the gain in effective bits is large, then the $6$dB per bit improvement in dynamic range will dominate the additive constant and result in a substantial {\em increase} in the CS system's dynamic range as compared to a conventional ADC.  Moreover, this entire discussion has ignored the fact that CS systems are highly robust to large saturation errors due to the {\em democratic} nature of CS measurements~\cite{LaskaBDB_Democracy}, so these results may even be understating the possibility for improvement. We explore the possibility of increased dynamic range empirically in Section~\ref{sec:tests}.

\subsection{Impact of CS on the PAR}
\label{ssec:PAR}

We conclude this section with one last note regarding the mollifying effect of a CS acquistion system on the PAR.  All of our expressions for the SQNR or RSNR as well as the dynamic range of a system depend in some way on the PAR of the signal $\x$ or the measurements $\y$, depending on the context.  In practice, the PAR has a significant impact on the resulting expressions.  However, the PAR of a signal $\x$ can vary widely in the range
\begin{equation} \label{eq:gammabound}
1 \le \gamma(\x) \le \sqrt{B},
\end{equation}
which follows from standard norm inequalities.  As an example, combining (\ref{eq:gammabound}) with the lower bound on the SQNR of a conventional ADC in (\ref{eq:SQNRdb}) means that in the best case (which corresponds to an all-constant vector $\x$) the bound in (\ref{eq:SQNRdb}) reduces to $6$dB per bit growth in SQNR with no offset, whereas in the worst case (which corresponds to a $K=1$ sparse $\x$) we incur an additive penalty of $-10 \log_{10}(\ambient)$dB.  As the dimension $B$ grows this penalty can become large, reflecting the fact that as the number of samples grows it becomes possible to construct a signal that has ever larger PAR.  This translates to a similarly wide range of possible values for the additive penalty in the bound on dynamic range in (\ref{eq:DR1db}).

Our aim here is to understand how CS impacts PAR. Clearly, we expect the PAR of the CS measurements $\y$ to differ from that of the signal $\x$ since each measurement typically consists of a weighted sum of the entries of $\x$.  Intuitively, such measurements have the potential to average out some of the ``spikes'' in $\x$ resulting in a potentially improved PAR.  This appears in the analysis in the expression for $\mathrm{SQNR}( \beta \y)$ in (\ref{eq:CSSQNRdb}), which shows that $\mathrm{SQNR}( \beta \y)$ can be improved over $\mathrm{SQNR}(\beta \x)$ in (\ref{eq:SQNRdb}) if $\rho \norm[\infty]{\x}^2/\norm[\infty]{\y}^2$ is somewhat larger than 1. In the worst-case, this quantity can be a great deal smaller than 1; however, on average we are likely to do significantly better.  As an illustration, we  describe what can be said when $\bPhi$ is a matrix with i.i.d.\ $\pm 1/\sqrt{M}$ (Rademacher) entries, although our analysis could easily be adapted to show similar results for the practical architectures described in Section~\ref{ssec:csthy}.

We begin with the worst-case.  By combining the the Cauchy-Schwartz inequality with standard $\ell_p$-norm inequalities, we have that for all $j$, $|y_j| \le \rho \norm[\infty]{\x}.$  Thus we obtain
$$
\rho \frac{\norm[\infty]{\x}^2}{\norm[\infty]{\y}^2} \ge \frac{1}{\rho}.
$$
Hence, in the worst-case
$$
20 \log_{10} \left( \frac{\sqrt{(1-\delta)\rho} \norm[\infty]{\x}}{\norm[\infty]{\y}} \right) \approx - 10 \log_{10}(\rho),
$$
which corresponds to an SQNR loss of $3$dB per octave increase in the subsampling factor. However, this bound will be achieved only when $\x$ is both constant magnitude and has elements with signs exactly matching one of the (randomly chosen) rows of $\bPhi$ --- a highly unlikely scenario.  Furthermore, this bound makes no use of the ``dithering'' effect promoted by the randomized measurements, a grave omission indeed.  Towards this end, we next consider a probabilistic bound to see that we can typically obtain better performance.
\begin{lemma} \label{lem:probbound}
Suppose that $\bPhi$ is chosen with i.i.d.\ entries with variance $1/M$ drawn according to any strictly sub-Gaussian distribution.  Then
\begin{equation} \label{eq:inftybound}
\rho \frac{\norm[\infty]{\x}^2}{\norm[\infty]{\y}^2} \ge \frac{\gamma(\x)^2}{4 \log(M)}
\end{equation}
with probability at least $1 - 2/M$.
\end{lemma}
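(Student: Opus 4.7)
The plan is to upper-bound $\norm[\infty]{\y}$ via a sub-Gaussian concentration argument and then substitute into the target expression, unfolding the definition of $\gamma(\x)$.

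First, I observe that each measurement $y_j = \sum_{i=1}^{B} \Phi_{ji} x_i$ is a linear combination of independent strictly sub-Gaussian random variables, hence is itself strictly sub-Gaussian. Since the entries of $\bPhi$ have variance $1/M$, the variance proxy of $y_j$ is $\norm{\x}^2/M$, and the standard sub-Gaussian tail bound gives
\begin{equation*}
\prob{|y_j| > t} \le 2 \exp\!\left( -\frac{M t^2}{2\norm{\x}^2} \right).
\end{equation*}

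Second, I apply a union bound across $j = 1, \ldots, M$ with the choice $t^2 = 4 \norm{\x}^2 \log(M) / M$, which makes the tail probability at most $2M \cdot M^{-2} = 2/M$. Consequently, with probability at least $1 - 2/M$, we have the uniform bound
\begin{equation*}
\norm[\infty]{\y}^2 \le \frac{4 \norm{\x}^2 \log(M)}{M}.
\end{equation*}

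Third, I substitute this bound into the quantity of interest. Since $\rho = B/M$, on the high-probability event we obtain
\begin{equation*}
\rho \frac{\norm[\infty]{\x}^2}{\norm[\infty]{\y}^2} \ge \frac{B}{M} \cdot \frac{M \norm[\infty]{\x}^2}{4 \norm{\x}^2 \log(M)} = \frac{1}{4\log(M)} \cdot \frac{B\, \norm[\infty]{\x}^2}{\norm{\x}^2}.
\end{equation*}
Recognizing from (\ref{eq:gammadef}) that $\gamma(\x)^2 = B\,\norm[\infty]{\x}^2/\norm{\x}^2$, this immediately yields the claimed bound (\ref{eq:inftybound}).

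The main (and only real) obstacle is invoking a clean sub-Gaussian tail bound for the entries of $\y$; the word \emph{strictly} in the hypothesis ensures that the variance proxy of each $y_j$ matches its variance $\norm{\x}^2/M$, so that no hidden constants creep in and the exponent $Mt^2/(2\norm{\x}^2)$ is correct. Everything else is a direct manipulation using $\rho = B/M$ and the definition of the peak-to-average ratio.
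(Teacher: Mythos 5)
Your proof is correct and follows essentially the same route as the paper: a sub-Gaussian tail bound on each $y_j$ with variance proxy $\norm{\x}^2/M$, a union bound over the $M$ measurements at the threshold $t^2 = 4\norm{\x}^2\log(M)/M$ (which is exactly the threshold the paper uses, written in a different but equivalent form), and a direct substitution using $\rho = B/M$ and the definition of $\gamma(\x)$. The only cosmetic difference is that you state a high-probability bound on $\norm[\infty]{\y}^2$ and then substitute, whereas the paper bounds the probability of the complement event directly.
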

\begin{proof}
By combining the union bound (over $M$ measurements) with standard tail bounds on a strictly sub-Gaussian distribution, we obtain
$$
\prob{ \|\y\|_\infty > t} \le 2 M \exp \left( - \frac{M t^2}{2 \norm{\x}^2} \right).
$$
Thus,
\begin{align*}
\prob{\mathrm{(\ref{eq:inftybound})~does~not~hold} } & \le 2 M \exp \left( - \frac{4 M \rho \log(M) \norm[\infty]{\x}^2}{2 \gamma(\x)^2 \norm{\x}^2} \right) \\
 & = 2 M \exp \left( - \frac{2 \log(M)  \norm[\infty]{\x}^2}{\gamma(\x)^2 \norm{\x}^2/B} \right)   \\
& = 2 \exp \left(\log(M) - 2 \log(M) \right) = \frac{2}{M},
\end{align*}
which establishes the lemma.
\end{proof}

Thus, in practice we expect our bound for $\mathrm{SQNR}(\y)$ in (\ref{eq:CSSQNRdb})  to differ from our bound for $\mathrm{SQNR}(\x)$  in (\ref{eq:SQNRdb}) only by a factor of $\gamma(\x)^2/4 \log(M)$. Recalling our bound on $\gamma(\x)$ we have that
$$
\frac{1}{4 \log(M)} \le \frac{\gamma(\x)^2}{4 \log(M)} \le \frac{B}{4 \log(M)}.
$$
Hence, for $\x$ with small PAR, we can expect a potential loss in SQNR when compared to direct quantization of $\x$, while for $\x$ with moderate or large PAR we can actually expect a significant improvement.

Finally, we can use Lemma~\ref{lem:probbound} to approximate (\ref{eq:CSSQNRdb}) with high probability as
$$
\mathrm{SQNR}(\beta \y) \gtrsim 6.02 b - 20 \log_{10}\left( 4 \log(M)/\sqrt{1-\delta} \right),
$$
which implies that CS allows us to essentially eliminate the negative impact of high PAR signals.  This is because the randomized measurement procedure of CS will produce measurements having a PAR that is independent of the input signal's PAR.  For high PAR signals, this results in a substantial improvement.

\section{Simulations}
\label{sec:tests}

Hardware devices are currently under construction that will permit laboratory testing of CS-based acquisition receivers~\cite{TroppLDRB_Beyond,SlaviLDB_Compressive}. In the interim, however, we have conducted a set of computer simulations to validate the theoretical results described above.  We begin by recalling that the use of CS implies that the input to the receiver must be reasonably sparse. Thus, our simulations assume that the receiver input consists of no more than $P$ signals, each of bandwidth no greater than $\sparsity/2$ and with a total bandwidth no greater than $\sparsity/2$. We also assume the presence of white additive noise across the input entire band. We assume that the $P$ input components do not overlap in the Fourier domain but otherwise might appear anywhere within the instantaneous bandwidth of the receiver $\ambient/2$.

The simulation testing suite assembled for this effort permits from $1$ to $P$ of these ``voice-like'' signals to be modulated, if desired, translated up to arbitrary frequencies, summed, corrupted with additive white noise, and then applied to the software-based emulation of a CS receiver. The quality of the recovered individual components is then quantified using the RSNR as defined in~\eqref{eq:RSNR}.

Figure~\ref{fig:sims1}(a) shows the results of a set of simulations of a CS-based wideband signal acquisition system.  In this simulation the signal to be acquired consists of a single 3.1 kHz-wide unmodulated voice signal single-side-band-upconverted to a frequency within the 1 MHz input bandwidth of the receiver. Performance is measured as a function of the subsampling factor $\subsamp$. The experiment shown in Fig.~\ref{fig:sims1}(a) is conducted at three ISNRs --- $60$, $40$, and $20$dB --- where ISNR is simply the ratio of the signal power to that of the noise within the 3.1 kHz bandwidth occupied by the signal. The impact of this signal noise on the reconstruction error is quantified using the RSNR and is evaluated for three sampling/recovery strategies:
\begin{itemize}

\item Bandpass sampling --- This is not a recommended practical technique, but it serves as a benchmark since, like CS, it does not exploit any prior knowledge as to where the signal resides in the input spectrum. It is important to recall that this method ``folds'' the input spectrum so that signal frequencies can no longer be unambiguously determined at the receiver.

\item Oracle-assisted signal recovery from compressive measurements --- Although typically not practical, again, the oracle provides a way to determine what portion of any observed received quality degradation is completely unavoidable within the CS framework and what portion is due to the recovery algorithm's inability to determine the correct spectral support.

\item Practical CS-based signal recovery using CoSaMP~\cite{NeedeT_CoSaMP} to determine the support of the input signal.

\end{itemize}

We can make the following observations from the experimental results depicted in Fig.~\ref{fig:sims1}(a):
\begin{itemize}
\item For small amounts of subsampling the RSNR of both the bandpass sampled signal and the oracle-assisted CS recovery is degraded at a rate of $3$dB for each octave increase in the ratio $\subsamp$, exactly as predicted by Theorem~\ref{thm:noisefolding}.

\item The RSNR of the oracle-assisted recovery approach closely follows the bandpass sampling RSNR across the entire range considered for $\subsamp$.  The performance of the CoSaMP algorithm generally tracks the others until $\subsamp$ nears the theoretical limit:
\begin{equation} \label{eq:cslimit}
\subsampcs = \kappa_0 \frac{\subsampmax}{\log \subsampmax} = \kappa_0 \frac{\ambient/\sparsity}{\log(\ambient/\sparsity)}.
\end{equation}
Note that for these experiments, $\subsampmax = (2 \cdot 10^6) / (3.1 \cdot 10^3) \approx 645$, and thus for $\kappa_0 = \frac{1}{2}$ we have $\log_2(\subsampcs) \approx 5.6$.  In Fig.~\ref{fig:sims1}(a) we observe that we do not begin to observe a dramatic difference between the performance of oracle-aided CS and CoSaMP until $\log_2(\subsamp) > 5$. This demonstrates that the practical impact of $\kappa_0$ on the maximum subsampling factor $\subsampcs$ is likely to be minimal.

\item The RSNR performance of the CoSaMP algorithm generally tracks the others, but performs progressively more poorly for high subsampling factors.  Moreover, its performance collapses as the theoretical limit (\ref{eq:cslimit}) is reached. 

\item The oracle-aided CS algorithm performance closely matches the bandpass sampling RSNR, even for values of $\rho$ where CoSaMP fails.  This suggests that CoSaMP is unable to identify the correct locations of the nonzero Fourier coefficients, since this is the only difference between CoSaMP and the oracle-aided algorithm.  Thus, if any side information concerning the locations of these nonzeros were available (as in a streaming context, e.g., see~\cite{Vaswa_Analyzing}), then one could expect that exploiting this information would have a significant impact on the RSNR.

\end{itemize}

\begin{figure*}[t]
\centering
\hspace{-.075\linewidth}
\begin{minipage}{.45\linewidth}
\centering
\begin{tabular}{cc}
\raisebox{30mm}{\small{(a)}} &\includegraphics[width=.82\imgwidth]{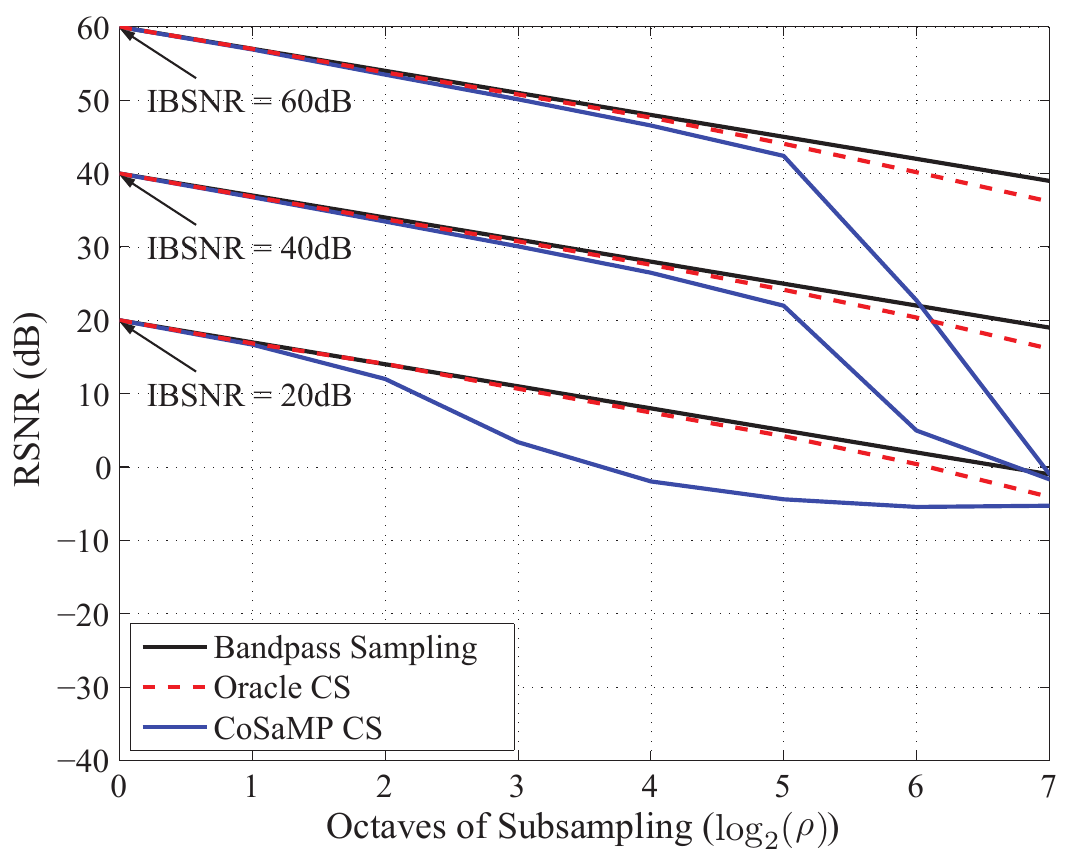} \\
\end{tabular}
\end{minipage}
\hspace{.05\linewidth}
\begin{minipage}{.45\linewidth}
\centering
\begin{tabular}{cc}
\raisebox{30mm}{\small{(b)}} &\includegraphics[width=.82\imgwidth]{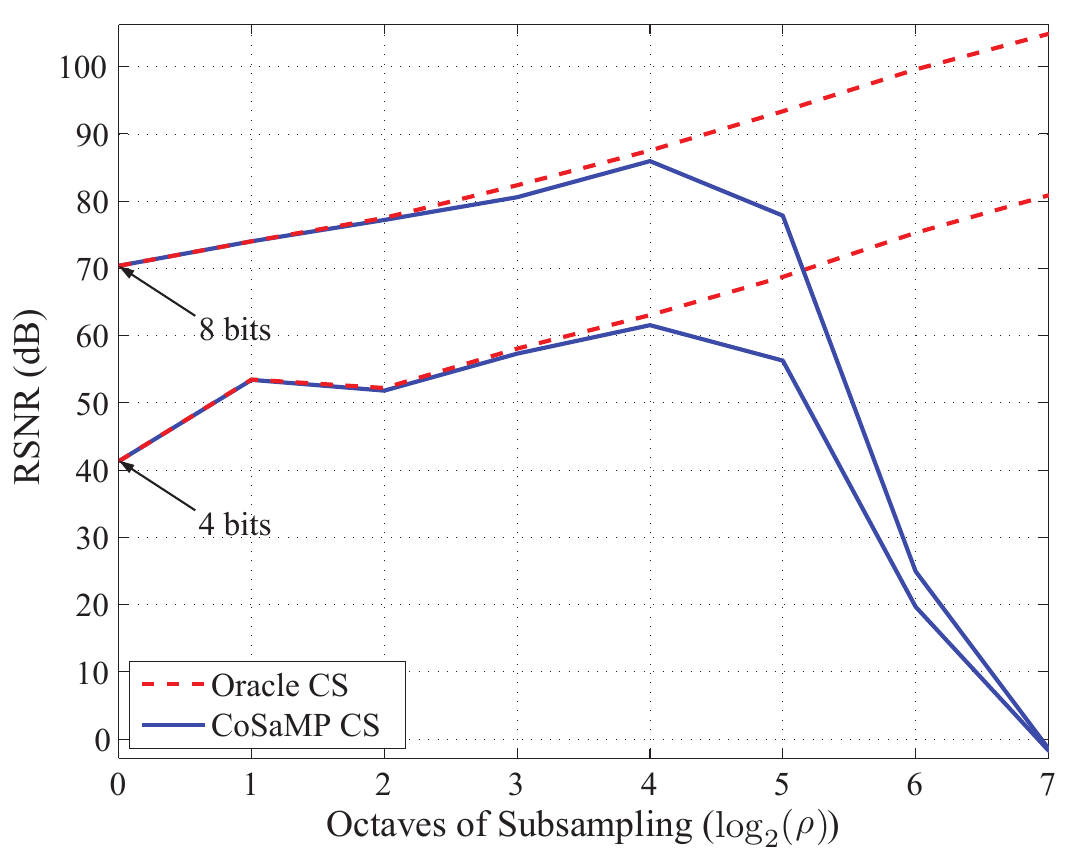} \\
\end{tabular}
\end{minipage}
\caption{Impact of noise and quantization on RSNR as a function of receiver subsampling ratio $\subsamp$. (a) shows the RSNR for an environment consisting of a single unmodulated voice channel in the presence of additive white noise.  The bandpass sampling curve shows the SNR degradation attributable to the $3$dB per octave SNR loss. (b) shows the RSNR for an environment consisting of a noise-free single unmodulated voice channel and quantized measurements starting at a bit-depth of $4$ or $8$ bits per measurement when $\log_{2}(\rho) = 0$.  We increased the bit-depth as a function of the sample rate according to the trends outlined in~\cite{LeRonRee::2005::Analog-to-Digital-Converters}.  We see a marked improvement in RSNR as a direct result of the sampling rate being decreased. \label{fig:sims1}}
\end{figure*}

We next conduct an experiment that demonstrates how the dynamic range of a CS system can be increased, depicted in Fig.~\ref{fig:sims1}(b). As noted earlier, any improvement in the SQNR of the CS measurements will translate to an improved dynamic range.  Thus, in our experiments, we compute the average RSNR obtained after recovery from quantized CS measurements as a proxy for the dynamic range.  Furthermore, we make use of the trends outlined in \cite{LeRonRee::2005::Analog-to-Digital-Converters} that show that the number of bits per measurement grows according to $b = \lambda - 10\log10(B/\rho)/2.3$, where $\lambda$ is a constant determined by the bit-depth of a Nyquist-rate sampler.  The number of bits per measurements then grows linearly with the octaves of subsampling, with slope of about $1.3$.  This relationship between sample rate and bit-depth is fundamental to understanding the dynamic range benefits of CS systems.

Specifically, in each trial we generate a single voice-like signal and compute measurements with the CS receiver using the same setup as in the previous experiment but where the signal is noise-free so that we can isolate the impact of quantization noise.  The measurements are scaled to make use of the full range of the quantizer, quantized to $b$ bits, and then recovered using CoSaMP (solid line) and the oracle recovery algorithm (dashed line).  As explained earlier, for a fixed power and cost, decreasing the sampling rate enables us to choose a higher bit-rate quantizer.  To reflect this, we examine two cases: $\lambda$ such that the Nyquist-rate sampler starts at (\emph{i}) $b=4$ bits and (\emph{ii}) $b=8$ bits.

From this experiment we see that in both cases, the RSNR grows significantly, achieving a $20$dB gain at $4$ octaves of subsampling over Nyquist sampling in  the $4$ bit case and a $17$dB gain in the $8$ bit case.  The performance then decreases as we move to a regime where CS recovery is no longer sustainable.  The oracle performance continues to improve as subsampling is further increased.  This experiment highlights the very real benefit of reduced sampling rates; easing the sampling rate requirement can allow us to use higher fidelity hardware components, such as high bit-depth quantizers.

While it appears that there is a straightforward tradeoff between noise folding (bad) and dynamic range (good) in CS systems, the reality is more nuanced.  For instance, when a both signal and quantization noise are present, it is not always possible to mitigate the negative effects of noise folding by increasing the ADC bit-depth, since there may be little benefit to increasing the precision of already noisy measurements. See~\cite{LaskaB_Regime} for further discussion of this issue.

\section{Using the Design Rules to Evaluate a CS Receiver}
\label{sec:sims}

The simulations presented in Section~\ref{sec:tests} provide an initial validation of the engineering design relationships established in this paper. To see how they might be used in practice, we return briefly to the example set of system requirements shown in Table~\ref{tab:reqs}.  By applying the rules described in Section~\ref{sec:csthy} and validated in~\ref{sec:tests} we find that $\subsampcs$, the maximum subsampling factor in a CS system, is about $160$. This implies sampling rate, and data link transmission rate, can be reduced from $1$ GHz to $6.25$ MHz but at a noise floor loss of about $22$dB. Use of the formulae in Section~\ref{sec:tests} indicates, however, that an improvement of $9$--$10$ bits can be achieved owing to the lower sampling rate. If we presume that the ADC used at a $1$ GHz sampling rate might have $8$ bits of dynamic range, then the compressive sensing receiver should be able to achieve $17$ or more, leading to a system dynamic range of greater than $100$dB.  Comparing these results with the objectives in Table~\ref{tab:reqs} shows the remarkable result that a CS-based acquisition system can theoretically meet the very stringent  and rarely attained instantaneous bandwidth and dynamic range requirements, but at the cost of a reduced RSNR.

\section{Conclusions, Implications, and Recommendations for Future Work}
\label{sec:conc}

This paper has examined how CS can provide an exciting new degree of freedom in the design of high-performance signal acquisition systems. Specifically, the results reported in this paper can be captured succinctly as follows:
\begin{itemize}
\item The application of CS theory to the problem of designing an RF receiver indicates that the approach is indeed feasible, and that it should reduce the size, weight, power consumption and monetary cost of the receiver.  This comes at a cost of an increased noise figure and an increased amount of computation required at the downstream ``processing center.''
\item There is a direct, predictable relationship between the subsampling factor and the noise figure of the receiver. Specifically, we lose $3$dB of RNSR for each halving of the sampling rate.
\item Simulation results indicate that a properly designed CS system can approach, and even meet, the theoretically predicted performance, if the ISNR is high enough.
\item Since it permits the use of lower-rate, but higher performance ADCs, the introduction of CS can substantially improve the dynamic range of a receiver system.
\end{itemize}

These results mean that CS introduces new tradeoffs in the design of signal acquisition systems.  While a poorer noise figure reduces the sensitivity of a receiver, at the ``systems level'' that might be acceptable in trade for what one gets for it --- much wider instantaneous bandwidth, improved dynamic range, and reduction of virtually all elements of the SWAP ``cost vector'' at the sensor, where it usually matters the most. We also note that the decimation permitted by using CS in a sparse signal environment permits a significant dynamic range improvement --- up to $20$dB in the settings considered in this paper. Examples of how this tradeoff can be exploited in practical systems are explored in~\cite{futureDASP}.

Thus we conclude that further investigation in this area will produce both theoretical and practical fruit. There are two areas in which we recommend immediate emphasis ---  (\emph{i}) verification that CS receivers can be physically implemented with performance we have theoretically predicted, and (\emph{ii}) more work on practical and efficient  \emph{processing center} algorithms for signal reconstruction, or, equivalently, parameter estimation (e.g., emitter location) from the incoming compressive measurements.  Successes in these two areas will make CS an important tool in the toolbox of radio system designers.

\section*{Appendix}

We now provide proofs of the more technical results in this paper.  In several of the lemmas we use the notation $\lambda_j(\A)$ to denote the $j^{\mathrm{th}}$ largest eigenvalue of $\A$.  We also let $s_j(\A)$ denote the $j^\mathrm{th}$ singular value of $\A$, i.e., $s_j(\A) = \lambda_j(\A^T \A)$. Before establishing our main result concerning oracle-assisted recovery, we first state the following useful lemma.  The proof of this lemma can be found in~\cite{Daven_Random}.
\begin{lemma}[Lemma 7.1 of~\cite{Daven_Random}] \label{lem:oraclebound}
Suppose that $\R$ is an $\meas \times \ambient$ matrix and let $\Lambda$ be a set of indices with $|\Lambda| \le \sparsity$. If $\R$ satisfies the RIP of order $\sparsity$ with constant $\delta$, then for $j=1,2,\ldots, \sparsity$ we have
\begin{equation} \label{eq:svdbound}
\frac{1}{\sqrt{1+\delta}} \le s_j(\R_{\Lambda}^\dag) \le \frac{1}{\sqrt{1-\delta}}.
\end{equation}
\end{lemma}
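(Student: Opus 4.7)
The plan is to read the RIP purely as a statement about the spectrum of the column submatrix $\R_\Lambda$ and then pass to the pseudoinverse via the SVD. Specifically, for any $\sparsity$-sparse vector supported on $\Lambda$ with $|\Lambda|\le \sparsity$, the bound $(1-\delta)\norm{\balpha}^2 \le \norm{\R\balpha}^2 \le (1+\delta)\norm{\balpha}^2$ can be rewritten, after restricting to the coordinates in $\Lambda$, as
\begin{equation*}
(1-\delta)\norm{\bv}^2 \;\le\; \norm{\R_\Lambda \bv}^2 \;\le\; (1+\delta)\norm{\bv}^2 \qquad \text{for all } \bv \in \real^{|\Lambda|}.
\end{equation*}
By the variational characterization of singular values (equivalently, the Courant--Fischer min-max theorem applied to $\R_\Lambda^T \R_\Lambda$), this is precisely the statement that every singular value of $\R_\Lambda$ lies in $[\sqrt{1-\delta},\sqrt{1+\delta}]$.

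Next I would invoke the SVD of $\R_\Lambda$. Since $\delta<1$, the lower bound $\sqrt{1-\delta}>0$ shows that $\R_\Lambda$ has full column rank, so its thin SVD is $\R_\Lambda = \U \bSigma \V^T$ with $\bSigma$ an invertible $|\Lambda|\times|\Lambda|$ diagonal matrix. Then $\R_\Lambda^\dag = \V \bSigma^{-1} \U^T$, so the singular values of $\R_\Lambda^\dag$ are exactly the reciprocals $1/s_j(\R_\Lambda)$ of the singular values of $\R_\Lambda$. Substituting the bounds from the previous paragraph yields
\begin{equation*}
\frac{1}{\sqrt{1+\delta}} \;\le\; s_j(\R_\Lambda^\dag) \;\le\; \frac{1}{\sqrt{1-\delta}}
\end{equation*}
for $j=1,\ldots,\sparsity$, which is the claim.

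Honestly, the only real obstacle here is bookkeeping rather than mathematics: one must be careful that the RIP is asserted for $\sparsity$-sparse vectors in $\real^\ambient$, while the singular value statement concerns the ambient-dimension-reduced matrix $\R_\Lambda$ acting on $\real^{|\Lambda|}$. Checking that the two are equivalent for vectors supported on $\Lambda$ is immediate but worth spelling out. Everything else — non-singularity of $\bSigma$, the reciprocal relationship between the singular values of a full column rank matrix and those of its pseudoinverse — is standard linear algebra and needs no additional argument.
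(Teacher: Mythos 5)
Your proof is correct and is essentially the argument the paper relies on (the paper defers to Lemma 7.1 of the cited reference, which proceeds exactly this way): the RIP pins the singular values of $\R_\Lambda$ to $[\sqrt{1-\delta},\sqrt{1+\delta}]$, and the pseudoinverse of a full-column-rank matrix has reciprocal singular values. The only cosmetic remark is that $\R_\Lambda^\dag$ has $|\Lambda|\le\sparsity$ singular values, so the index $j$ really runs to $|\Lambda|$; this does not affect the substance.
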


\begin{proof}[Proof of Theorem~\ref{thm:expectederror}]
Recall that for the oracle-assisted recovery algorithm, since the RIP ensures that $\R_\Lambda$ is full rank, we have that
$$
\widehat{\balpha}|_{\Lambda} = \balpha|_{\Lambda} + \R_{\Lambda}^\dag \e.
$$
Thus, our goal is to estimate $\expec{ \norm{\R_{\Lambda}^\dag \e }^2}$.  Towards this end, we first note that for any $\A$, we have
\begin{align*}
\expec{\norm{\A \e}^2} & = \expec{ \trace{\A \e (\A \e)^T} } \\
 & = \trace{ \A \expec{ \e \e^T} \A^T }.
\end{align*}
Since $\e$ is a white random vector, this reduces to
\begin{align*}
\expec{\norm{\A \e}^2} & = \trace{ \A \left( \sigma_{\e}^2 \I_{\meas} \right) \A^T } \\
& = \sigma_{\e}^2 \trace{\A \A^T} = \sigma_{\e}^2 \norm[F]{\A}^2,
\end{align*}
where $\norm[F]{\cdot}$ denotes the Frobenius norm of $\A$.  Next we recall that the Frobenius norm of a $\sparsity \times \meas$ matrix with $\sparsity < \meas$ can also be calculated as
$$
\norm[F]{\A}^2 = \sum_{j=1}^{\sparsity} s_j(\A)^2.
$$
Thus, we have
\begin{equation} \label{eq:recerrorsvs}
\expec{\norm{\R_{\Lambda}^\dag \e}^2} = \sigma_{\e}^2 \sum_{j=1}^{\sparsity} s_j(\R_{\Lambda}^\dag)^2.
\end{equation}
From Lemma~\ref{lem:oraclebound}, $s_j(\R_{\Lambda}^\dag) \in \left[1/\sqrt{1+\delta},1/\sqrt{1-\delta} \right]$ for $j=1,2,\ldots,\sparsity$, and hence
$$
\frac{\sparsity}{1+\delta} \le \sum_{j=1}^{\sparsity} s_j\left( \R_{\Lambda}^\dag \right)^2 \le \frac{\sparsity}{1-\delta},
$$
which combined with (\ref{eq:recerrorsvs}) establishes (\ref{eq:thm:expectederror_a}).  Next, we note that
$$
\frac{\mathrm{RSNR}}{\mathrm{MSNR}} = \frac{\norm{\balpha}^2}{\norm{\R \balpha}^2}  \frac{\expec{\norm{ \e }^2}}{\expec{\norm{ \widehat{\balpha} - \balpha }^2}}  = \frac{\norm{\balpha}^2}{\norm{\R \balpha}^2}   \frac{\meas \sigma_{\e}^2}{\expec{\norm{ \widehat{\balpha} - \balpha }^2}}.
$$
Combining this with the RIP and with~\eqref{eq:thm:expectederror_a} yields~\eqref{eq:thm:expectederror_b}.
\end{proof}

We next establish the following preliminary lemma which we will use to establish Lemma~\ref{lem:RIP_orthoprojector}.
\begin{lemma} \label{lem:eigbound}
Let $\A$ be a real, symmetric, positive definite $\ambient \times \ambient$ matrix, and let $\B$ be a real $\ambient \times \meas$ matrix with $\meas \le \ambient$.  Then
\begin{equation} \label{eq:lambdamax}
\lambda_{\mathrm{max}}(\B^T \A \B) \le \lambda_{\mathrm{max}}(\A) \lambda_{\mathrm{max}}(\B^T\B)
\end{equation}
and
\begin{equation} \label{eq:lambdamin}
\lambda_{\mathrm{min}}(\B^T \A \B) \ge \lambda_{\mathrm{min}}(\A) \lambda_{\mathrm{min}}(\B^T \B).
\end{equation}
\end{lemma}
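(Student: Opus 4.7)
The plan is to prove both bounds via the Rayleigh quotient (Courant--Fischer) characterization of extreme eigenvalues. Since $\A$ is symmetric positive definite and $\B^T\A\B$ is symmetric positive semidefinite, every relevant quadratic form is nonnegative, and for any symmetric matrix $\M$ we have $\lambda_{\max}(\M) = \max_{\x \ne 0} \x^T\M\x/\x^T\x$ and $\lambda_{\min}(\M) = \min_{\x \ne 0} \x^T\M\x/\x^T\x$.

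For the first bound \eqref{eq:lambdamax}, I would fix $\x \in \reals^M$ with $\x \ne 0$ and factor the Rayleigh quotient as
\[
\frac{\x^T \B^T \A \B \x}{\x^T \x}
\;=\; \frac{(\B\x)^T \A (\B\x)}{(\B\x)^T (\B\x)} \cdot \frac{\x^T \B^T \B \x}{\x^T \x},
\]
whenever $\B\x \ne 0$; the left factor is at most $\lambda_{\max}(\A)$ by the Rayleigh characterization applied to $\A$, and the right factor is at most $\lambda_{\max}(\B^T\B)$. When $\B\x = 0$, the original quotient is $0$ and the inequality is trivial since the right-hand side of \eqref{eq:lambdamax} is nonnegative. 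Maximizing over $\x$ then yields \eqref{eq:lambdamax}.

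For the second bound \eqref{eq:lambdamin}, I would run the same argument in reverse: for $\x \ne 0$ with $\B\x \ne 0$, the identical factorization gives
\[
\frac{\x^T \B^T \A \B \x}{\x^T \x} \;\ge\; \lambda_{\min}(\A)\cdot\frac{\x^T \B^T \B \x}{\x^T \x} \;\ge\; \lambda_{\min}(\A)\,\lambda_{\min}(\B^T \B),
\]
using $\lambda_{\min}(\A) > 0$ for the first inequality. If $\B\x = 0$ for some nonzero $\x$, then $\B$ is rank-deficient, so $\lambda_{\min}(\B^T\B) = 0$, and the inequality becomes $0 \ge 0$, again trivially true. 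Minimizing over $\x$ establishes \eqref{eq:lambdamin}.

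The only subtle point, and the one to handle carefully, is the possibility that $\B$ has a nontrivial kernel (which is allowed even when $M \le B$): this threatens the lower bound because the factorization is not valid when $\B\x = 0$. The fix is to observe that in exactly this case $\lambda_{\min}(\B^T\B)$ vanishes, so the claim reduces to the obvious statement $\lambda_{\min}(\B^T\A\B) \ge 0$, which holds since $\A \succ 0$ makes $\B^T\A\B$ positive semidefinite. No other step poses an obstacle; everything else is a direct application of the Rayleigh quotient bounds for $\A$ and $\B^T\B$.
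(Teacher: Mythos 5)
Your proof is correct and follows essentially the same route as the paper: both rest on the Rayleigh-quotient bounds for $\A$ and $\B^T\B$ combined with the substitution $\x = \B\y$. The only cosmetic difference is that the paper writes the chain as $\y^T\B^T\A\B\y \le \lambda_{\mathrm{max}}(\A)\,\y^T\B^T\B\y$ without ever dividing by $\norm{\B\y}^2$, which sidesteps the $\B\y = 0$ edge case that you (correctly) handle separately.
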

\begin{proof}
Recall that since both $\A$ and $\B^T \B$ are real, symmetric matrices, we have that
$$
\lambda_{\mathrm{min}}(\A) \le \frac{\x^T \A \x}{\x^T \x} \le \lambda_{\mathrm{max}}(\A)
$$
for all $\x \in \real^\ambient$ and
$$
\lambda_{\mathrm{min}}(\B^T \B) \le \frac{\y^T \B^T \B \y}{\y^T \y} \le \lambda_{\mathrm{max}}(\B^T \B)
$$
for all $\y \in \real^\meas$. Thus, by setting $\x = \B \y$, we have
\begin{align*}
\frac{\y^T \B^T \A \B \y}{\y^T \y} = \frac{\x^T \A \x}{\y^T \y} & \le \lambda_{\mathrm{max}}(\A) \frac{\x^T \x}{\y^T \y} \\
& = \lambda_{\mathrm{max}}(\A) \frac{\y^T \B^T \B \y}{\y^T \y} \\
&  \le \lambda_{\mathrm{max}}(\A) \lambda_{\mathrm{max}}(\B^T\B),
\end{align*}
which establishes (\ref{eq:lambdamax}).  The proof of (\ref{eq:lambdamin}) follows from a similar calculation.
\end{proof}

\begin{proof}[Proof of Lemma~\ref{lem:RIP_orthoprojector}]
To begin, let $\R = \U \bSigma \V^T$ denote the reduced form of the singular value decomposition of $\R$, so that if $\R$ has rank $\meas$ then $\U$ is an $\meas \times \meas$ unitary matrix, $\bSigma$ is a diagonal $\meas \times \meas$ matrix whose entries are given by the singular values $s_j(\R)$, and $\V$ is an $\ambient \times \meas$ matrix with orthonormal columns.  Let
$$
\widetilde{\R} = \bSigma^{-1} \U^T \R = \bSigma^{-1} \U^T \U \bSigma \V^T = \V^T
$$
and observe that from the properties of $\V$ and the singular value decomposition we have that $\widetilde{\R}$ has orthonormal rows and the same row space as $\R$. Thus, it remains to show (\ref{eq:RIP_ortho_bound}).  Towards this end, we note that if $\balpha$ is $\sparsity$-sparse, then (\ref{eq:RIP_ortho_bound}) is equivalent to requiring that
$$
\frac{1 - \delta}{s_{\mathrm{max}}^2(\R)} \le \frac{\norm{\widetilde{\R}_\Lambda \x}^2}{\norm{\x}^2} \le \frac{1 + \delta}{s_{\mathrm{min}}^2(\R)}
$$
holds for all $\x \in \real^{|\Lambda|}$ and all $\Lambda$ with $|\Lambda| \le \sparsity$.  By noting that $\widetilde{\R}_{\Lambda} = \bSigma^{-1} \U^T \R_{\Lambda}$ and applying Lemma~\ref{lem:eigbound}, we have that
\begin{align*}
\frac{\norm{\widetilde{\R}_\Lambda \x}^2}{\norm{\x}^2} = \frac{ \x^T \widetilde{\R}_\Lambda^T \widetilde{\R}_\Lambda \x}{\x^T \x} & \le \lambda_{\mathrm{max}}(\widetilde{\R}_\Lambda^T \widetilde{\R}_\Lambda) \\
& = \lambda_{\mathrm{max}}(\R_\Lambda^T \U \bSigma^{-1} \bSigma^{-1} \U^T \R_\Lambda)   \\
& \le \lambda_{\mathrm{max}}(\bSigma^{-2}) \lambda_{\mathrm{max}}(\R_\Lambda^T \U \U^T \R_\Lambda) \\
& = \frac{1}{s_{\mathrm{min}}^2(\R)} \lambda_{\mathrm{max}}(\R_\Lambda^T \R_\Lambda)   \\
& \le \frac{1 + \delta}{s_{\mathrm{min}}^2(\R)}.
\end{align*}
The lower bound follows via a similar argument.
\end{proof}

\begin{proof}[Proof of Theorem~\ref{thm:noiseexpansion}]
We begin by noting that $\expec{\R \n} = \R \expec{\n} = 0$, so that $\R \n$ is zero-mean, as desired.  Hence, we now consider $\expec{\R \n (\R \n)^T}$.  Note that the assumptions on $\R$ imply that $\R \R^T = \subsamp \I_{\meas}$, and hence
\begin{align}
\expec{\R \n (\R \n)^T} & = \R \expec{ \n \n^T} \R^T \\
&  = \sigma_{\n}^2 \R \R^T = \subsamp \sigma_{\n}^2 \I_{\meas},\label{eq:noiseexpansion2}
\end{align}
which establishes that $\sigma_{\R \n}^2 = \subsamp \sigma_{\n}^2$. Furthermore, since
\begin{align*}
\frac{\mathrm{MSNR}}{\mathrm{ISNR}} & = \frac{\norm{\R \balpha}^2}{\norm{\balpha}^2} \frac{\expec{\norm{\n|_\Lambda}^2}}{\expec{\norm{ \R \n }^2}} \\
& = \frac{\norm{\R \balpha}^2}{\norm{\balpha}^2} \frac{\sparsity \sigma_{\n}^2}{\meas \sigma_{\R \n}^2} = \frac{\norm{\R \balpha}^2}{\norm{\balpha}^2} \frac{\sparsity}{\ambient},
\end{align*}
the bound in (\ref{eq:noiseexpansionb}) simply follows from the RIP.
\end{proof}

\begin{proof}[Proof of Lemma~\ref{lem:betaopt}]
\label{proof:betaopt}
Begin by taking $\beta = G/\norm[\infty]{\x}$.  Observe that
$$
\norm[\infty]{\beta \x} = \beta \norm[\infty]{\x} = G,
$$
and thus no entries of $\beta \x$ exceed the saturation level $G$.  Hence, we can bound the quantization error as
\begin{equation} \label{eq:SQNRlem1}
\norm{\beta \x - \quant{\beta \x}}^2 \le \ambient \left( \frac{\Delta}{2} \right)^2.
\end{equation}
We also have that
\begin{equation} \label{eq:SQNRlem2}
\norm{\beta \x}^2 = \beta^2 \norm{\x}^2 = \frac{G^2 \norm{\x}^2}{\norm[\infty]{\x}^2}.
\end{equation}
Combining (\ref{eq:SQNRlem1}) and (\ref{eq:SQNRlem2}), we obtain that
$$
\mathrm{SQNR}(\beta \x) = \frac{\norm{\beta \x}^2}{\norm{\beta \x - \quant{\beta \x}}^2} \ge \frac{ G^2 \norm{\x}^2/\norm[\infty]{\x}^2}{\ambient \left( \Delta/2 \right)^2},
$$
which simplifies to yield the desired result.
\end{proof}

\begin{proof}[Proof of Theorem~\ref{thm:dr}]
We begin by considering $\betamin(\x)$.  Recall that for all scalings $\beta < G/\norm[\infty]{\x}$ we have that $\norm[\infty]{\beta \x} < G$, so that there are no saturations.  Thus we can bound the SQNR as
$$
\mathrm{SQNR}(\beta \x) \ge \frac{\beta^2 \norm{\x}^2}{\ambient (\Delta/2)^2}.
$$
Thus, if we ensure that
$$
\frac{\beta^2 \norm{\x}^2}{\ambient (\Delta/2)^2} \geq C
$$
then we also guarantee that $\mathrm{SQNR}(\beta \x) \geq C$.  This will occur provided that
$$
\beta^2 \ge \frac{C\ambient}{\norm{\x}^2}  (\Delta/2)^2,
$$
and thus we can set
$$
\betamin(\x)^2 = \frac{C\ambient}{\norm{\x}^2}  (\Delta/2)^2.
$$

We now turn to $\betamax(\x)$.  Since we are now considering $\beta > G/\norm[\infty]{\x}$, there will be at least one entry of $\x$ that takes a value greater than $G$ and thus saturates. Furthermore, the saturated value is guaranteed to have error greater than $\Delta/2$ since our quantizer represents a maximum value of $G - \Delta/2$.   Thus, we observe that the total quantization error is less than the error of a signal where each element takes the value of the maximum saturated measurement.  If we define $\overline{G} = G - \Delta/2$ then we have that
\begin{equation} \label{eq:dr1}
\mathrm{SQNR}(\beta \x) \ge \frac{\beta^2 \norm{\x}^2}{\ambient(\beta \|\x\|_{\infty} - \overline{G})^{2}}.
\end{equation}

By design, we have that $\beta\|\x\|_{\infty} > G$, and hence
\begin{align*}
\left(  \beta \|\x\|_{\infty}- \overline{G} \right)^2 & = \beta^2 \|\x\|_{\infty}^2 - 2 \overline{G} \beta \|\x\|_{\infty} + \overline{G}^2    \\
& \le \beta^2 \|\x\|_{\infty}^{2} - 2 \overline{G} G + \overline{G}^2 \\
& = \beta^2 \|\x\|_{\infty}^2 - G^{2} + (\Delta/2)^2.
\end{align*}

From this we observe that
$$
\frac{\beta^2 \norm{\x}^2}{\ambient(\beta \|\x\|_{\infty} - \overline{G})^{2}} \ge \frac{\beta^2 \norm{\x}^2}{\ambient(\beta^2 \|\x\|_{\infty}^2 - G^{2} + (\Delta/2)^2)},
$$
and so from (\ref{eq:dr1}) we have that if
$$
\frac{\beta^2 \norm{\x}^2}{\ambient(\beta^2 \|\x\|_{\infty}^2 - G^{2} + (\Delta/2)^2)} > C
$$
then $\mathrm{SQNR}(\beta \x) > C$.  By rearranging, we see that this will occur provided that
$$
\beta^2 < \frac{C\ambient}{\|\x\|_{2}^{2}}\left( \frac{G^{2}-(\Delta/2)^{2}}{C \gamma(\x)^2 - 1} \right).
$$
Thus we can set
$$
\betamax(\x)^2 = \frac{C\ambient}{\|\x\|_{2}^{2}}\left( \frac{G^{2}-(\Delta/2)^{2}}{C \gamma(\x)^2 - 1} \right).
$$

Combining our expressions for $\betamin(\x)$ and $\betamax(\x)$ we obtain
\begin{equation*} \label{eq:finaldr}
\mathrm{DR}_{C}(\x) \ge \left( \frac{\betamax(\x)}{\betamin(\x)} \right)^2 = \frac{1}{C\gamma(\x)^2 - 1} \left( \left(\frac{2G}{\Delta}\right)^{2} -1 \right),
\end{equation*}
which simplifies to establish (\ref{eq:DR1}).
\end{proof}

\bibliographystyle{IEEEtran}
\small
\bibliography{bibpreamble,bibmain}

\end{document}